\journal{arXiv}
\begin{document}
\begin{frontmatter}
\title{2-connecting Outerplanar Graphs without Blowing up the Pathwidth}
\author[*]{Jasine Babu}
\ead{jasine@csa.iisc.ernet.in}
\author[**]{Manu Basavaraju}
\ead{iammanu@gmail.com}
\author[*]{L. Sunil Chandran}
\ead{sunil@csa.iisc.ernet.in}
\author[*]{Deepak Rajendraprasad}
\ead{deepakr@csa.iisc.ernet.in}
\address[*]{Department of Computer Science and Automation, Indian Institute of Science, Bangalore, India.}
\address[**]{Department of Informatics, University of Bergen, Norway.}
\begin{abstract}
Given a connected outerplanar graph $G$ of pathwidth $p$, we give an algorithm to add edges to $G$ to get a supergraph of $G$, which is $2$-vertex-connected, outerplanar and of pathwidth $O(p)$. This settles an open problem raised by Biedl \cite{Biedl2012}, in the context of computing minimum height planar straight line drawings of outerplanar graphs, with their vertices placed on a two dimensional grid. In conjunction with the result of this paper, the constant factor approximation algorithm for this problem obtained by Biedl \cite{Biedl2012} for $2$-vertex-connected outerplanar graphs will work for all outer planar graphs. 
\end{abstract}
\begin{keyword}
 Pathwidth\sep Outerplanar Graph\sep $2$-vertex-connected
\end{keyword}
\end{frontmatter}
\section{Introduction}
A graph $G(V, E)$ is outerplanar, if it has a planar embedding with all its vertices lying on the outer face. Computing planar straight line drawings of planar graphs, with their vertices placed on a two dimensional grid, is a well known problem in graph drawing. The height of a grid is defined as the smaller of the two dimensions of the grid. If $G$ has a planar straight line drawing, with its vertices placed on a two dimensional grid of height $h$, then we call it a planar drawing of $G$ of height $h$. It is known that any planar graph on $n$ vertices can be drawn on an $(n-1) \times (n-1)$ sized grid \cite{Schnyder1990}. A well studied optimization problem in this context is to minimize the height of the planar drawing. 

Pathwidth is a structural parameter of graphs, which is widely used in graph drawing and layout problems \cite{Biedl2012,Dujmovic2002,Suderman04}. We use $\operatorname{pw}(G)$ to denote the pathwidth of a graph $G$. The study of pathwidth, in the context of graph drawings, was initiated by Dujmovic et al. \cite{Dujmovic2002}. It is known that any planar graph that has a planar drawing of height $h$ has pathwidth at most $h$ \cite{Suderman04}. However, there exist planar graphs of constant pathwidth but requiring $\Omega(n)$ height in any planar drawing \cite{Biedl2011}. In the special case of trees, Suderman \cite{Suderman04} showed that any tree $T$ has a planar drawing of height at most $3 \operatorname{pw}(T)-1$. Biedl \cite{Biedl2012} considered the same problem for the bigger class of outerplanar graphs. For any $2$-vertex-connected outerplanar graph $G$, Biedl \cite{Biedl2012} obtained an algorithm to compute a planar drawing of $G$ of height at most $4\operatorname{pw}(G)-3$. Since it is known that 
pathwidth 
is a lower bound for the height of the 
drawing \cite{Suderman04}, the algorithm given by 
Biedl \cite{Biedl2012} is a $4$-factor approximation algorithm for the problem, for any $2$-vertex-connected outerplanar graph. The method in Biedl \cite{Biedl2012} is to add edges to the $2$-vertex-connected outerplanar graph $G$ to make it a maximal outerplanar graph $H$ and then draw $H$ on a grid of height $4\operatorname{pw}(G)-3$. The same method would give a constant factor approximation algorithm for arbitrary outerplanar graphs, if it is possible to add edges to an arbitrary connected outerplanar graph $G$ to obtain a $2$-vertex-connected outerplanar graph $G'$ such that $\operatorname{pw}(G') = O(\operatorname{pw}(G))$. This was an open problem in Biedl \cite{Biedl2012}.

In this paper, we settle this problem by giving an algorithm to augment a connected outerplanar graph $G$ of pathwidth $p$ by adding edges so that the resultant graph is a $2$-vertex-connected outerplanar graph of pathwidth $O(p)$. Notice that, the non-triviality lies in the fact that $G'$ has to be maintained outerplanar. (If we relax this condition, the problem becomes very easy. It is easy to verify that the supergraph $G'$ of $G$, obtained by making two arbitrarily chosen vertices of $G$ adjacent to each other and to every other vertex in the graph, is $2$-vertex-connected and has pathwidth at most $\operatorname{pw}(G)+2$.) Similar problems of augmenting outerplanar graphs to make them $2$-vertex-connected, while maintaining the outerplanarity and optimizing some other properties, like number of edges added \cite{Garcia10,Kant96}, have also been investigated previously. 
\section{Background}\label{background}
A \textit{tree decomposition} of a graph $G(V, E)$ \cite{RobertsonS84} is a pair $(T, \mathscr{X})$, where $T$ is a tree
and $\mathscr{X} = (X_t : t \in V(T))$ is a family of subsets of $V(G)$, such that: 
\begin{enumerate}
 \item $\bigcup(X_t : t \in V(T)) = V(G)$.
 \item For every edge $e$ of $G$ there exists $t \in V(T)$ such that $e$ has both its end points in $X_t$.
 \item For every vertex $v \in V$, the induced subgraph of $T$ on the vertex set $\{t \in V(T): v \in X_t \}$ is connected. 
\end{enumerate}
The width of the tree decomposition is $\max_{t\in V(T)}{(|X_t| - 1)}$. Each $X_t \in \mathscr{X}$ is referred to as a bag in the tree decomposition. A graph $G$ has \textit{treewidth} $w$ if $w$ is the minimum integer such that $G$ has a tree decomposition of width $w$.

A \textit{path decomposition} $(P, \mathscr{X})$ of a graph $G$ is a tree decomposition of $G$ with the additional property that the tree $P$ is a path. The width of the path  decomposition is $\max_{t\in V(P)}{(|X_t| - 1)}$. A graph $G$ has \textit{pathwidth} $w$ if $w$ is the minimum integer such that $G$ has a path decomposition of width $w$. 

Without loss of generality we can assume that, in any path decomposition $(\mathcal{P}, \mathscr{X})$ of $G$, the vertices of the path $\mathcal{P}$ are labeled as $1, 2,\ldots$, in the order in which they appear in $\mathcal{P}$. Accordingly, the bags in $\mathscr{X}$ also get indexed as $1, 2,\ldots$. For each vertex $v \in V(G)$, define $FirstIndex_\mathscr{X}(v)=\min\{i \mid X_i \in \mathscr{X}$ contains $v\}$,  $LastIndex_\mathscr{X}(v)=\max\{i \mid X_i \in \mathscr{X}$ contains $v\}$ and $Range_\mathscr{X}(v)=[FirstIndex_\mathscr{X}(v), LastIndex_\mathscr{X}(v)]$. By the definition of a path decomposition, if $t \in Range_\mathscr{X}(v)$, then $v \in X_t$. If $v_1$ and $v_2$ are two distinct vertices, define $Gap_\mathscr{X}(v_1,v_2)$ as follows: 
\begin{itemize}                                                                                                                                                                                                                                                                                                                                                                                                                                                                                                                        
\item If $Range_\mathscr{X}(v_1) \cap Range_\mathscr{X}(v_2) \ne \emptyset$, then $Gap_\mathscr{X}(v_1,v_2)=\emptyset$.
\item If $LastIndex_\mathscr{X}(v_1)$ $<$ $FirstIndex_\mathscr{X}(v_2)$, then\\$Gap_\mathscr{X}(v_1,v_2)=[LastIndex_\mathscr{X}(v_1)+1, FirstIndex_\mathscr{X}(v_2)]$.
\item If $LastIndex_\mathscr{X}(v_2)$ $<$ $FirstIndex_\mathscr{X}(v_1)$, then\\$Gap_\mathscr{X}(v_1,v_2)=[LastIndex_\mathscr{X}(v_2)+1, FirstIndex_\mathscr{X}(v_1)]$.
\end{itemize}

The motivation for this definition is the following. Suppose $(\mathcal{P}, \mathscr{X})$ is a path decomposition of a graph $G$ and $v_1$ and $v_2$ are two non-adjacent vertices of $G$. If we add a new edge between $v_1$ and $v_2$, a natural way to modify the path decomposition to reflect this edge addition is the following. If $Gap_\mathscr{X}(v_1,v_2)=\emptyset$, there is already an $X_t \in \mathscr{X}$, which contains $v_1$ and $v_2$ together and hence, we need not modify the path decomposition. If $LastIndex_\mathscr{X}(v_1)$ $<$ $FirstIndex_\mathscr{X}(v_2)$, we insert $v_1$ into all $X_t \in \mathscr{X}$, such that $t \in Gap_\mathscr{X}(v_1,v_2)$. On the other hand, if $LastIndex_\mathscr{X}(v_2)$ $<$ $FirstIndex_\mathscr{X}(v_1)$, we insert $v_2$ to all $X_t \in \mathscr{X}$, such that $t \in Gap_\mathscr{X}(v_1,v_2)$. It is clear from the definition of $Gap_\mathscr{X}(v_1,v_2)$ that this procedure gives a path decomposition of the new graph. Whenever we add an edge $(v_1, v_2)$, we stick to this 
procedure to update 
the path decomposition.

A \textit{block} of a connected graph $G$ is a maximal connected subgraph of $G$ without a cut vertex. Every block of a connected graph $G$ is thus either a single edge which is a bridge in $G$, or a maximal $2$-vertex-connected subgraph of $G$. If a block of $G$ is not a single edge, we call it a non-trivial block of $G$. Given a connected outerplanar graph $G$, we define a rooted tree $T$ (hereafter referred to as the \textit{rooted block tree} of $G$) as follows. The vertices of $T$ are the blocks of $G$ and the root of $T$ is an arbitrary block of $G$ which contains at least one non-cut vertex (it is easy to see that such a block always exists). Two vertices $B_i$ and $B_j$ of $T$ are adjacent if the blocks $B_i$ and $B_j$ share a cut vertex in $G$. It is easy to see that $T$, as defined above, is a tree. In our discussions, we restrict ourselves to a fixed rooted block tree of $G$ and all the definitions hereafter will be with respect to this chosen tree. If block $B_i$ is a child block of block $B_j$ 
in the rooted block tree of $G$, and they share a cut vertex $x$, we say that $B_i$ is a 
child block of $B_j$ at $x$. 

It is known that every $2$-vertex-connected outerplanar graph has a unique Hamiltonian cycle \cite{Syslo79}. Though the Hamiltonian cycle of a $2$-vertex-connected block of $G$ can be traversed either clockwise or anticlockwise, let us fix one of these orderings, so that the \textbf{successor} and \textbf{predecessor} of each vertex in the Hamiltonian cycle in a block is fixed. We call this order the clockwise order. Consider a non-root block $B_i$ of $G$ such that $B_i$ is a child block of its parent, at the cut vertex $x$. If $B_i$ is a non-trivial block and $y_i$ and $y'_i$ respectively are the predecessor and successor of $x$ in the Hamiltonian cycle of $B_i$, then we call $y_i$ the last vertex of $B_i$ and $y'_i$ the first vertex of $B_i$. If $B_i$ is a trivial block, the sole neighbor of $x$ in $B_i$ is regarded as both the first vertex and the last vertex of $B_i$. By the term  \textbf{path}, we always mean a simple path, i.e., a path in which no vertex repeats. 
\section{An overview of our method}
Given a connected outerplanar graph $G(V, E)$ of pathwidth $p$, our algorithm will produce a $2$-vertex-connected outerplanar graph $G''(V, E'')$ of pathwidth $O(p)$, where $E \subseteq E''$. Our algorithm proceeds in three stages. 

(1) We use a modified version of the algorithm proposed by Govindan et al. \cite{Govindan98} to obtain a \textit{nice tree decomposition} (defined in Section \ref{stage1}) of $G$. Using this nice tree decomposition of $G$, we construct a special path decomposition of $G$ of width at most $4p+3$.

(2) For each cut vertex $x$ of $G$, we define an ordering among the child blocks attached through $x$ to their parent block. To define this ordering, we use the special path decomposition constructed in the first stage. This ordering helps us in constructing an outerplanar supergraph $G'(V, E')$ of $G$, whose pathwidth is at most $8p+7$, such that for every cut vertex $x$ in $G'$, $G' \setminus x$ has exactly two components. The properties of the special path decomposition of $G$ obtained in the first stage is crucially used in our argument to bound the width of the path decomposition of $G'$, produced in the second stage. 

(3) We $2$-vertex-connect $G'$ to construct $G''(V, E'')$, using a straightforward algorithm. As a by-product, this algorithm also gives us a surjective mapping from the cut vertices of $G'$ to the edges in $E'' \setminus E'$. We give a counting argument based on this mapping and some basic properties of path decompositions to show that the width of the path decomposition of $G''$ produced in the third stage is at most $16p+15$. 
\section{Stage 1: Construct a nice path decomposition of $G$}\label{stage1}
In this section, we construct a \textit{nice tree decomposition} of the connected outerplanar graph $G$ and then use it to construct a \textit{nice path decomposition} of $G$. We begin by giving the definition of a nice tree decomposition. 

Given an outerplanar graph $G$, Govindan et al. \cite[Section 2]{Govindan98} gave a linear time algorithm to construct a width $2$ tree decomposition $(T, \mathscr{Y})$ of $G$ where $\mathscr{Y}=(Y_t : t \in V(T))$, with the following special properties:
\begin{enumerate}
 \item  There is a bijective mapping $\mathit{b}$ from $V(G)$ to $V(T)$ such that, for each $v\in V(G)$, $v$ is present in the bag $Y_{\mathit{b}(v)}$.
 \item  If $B_i$ is a child block of $B_j$ at a cut vertex $x$, the vertex set $\left\{\mathit{b}(v)\mid v\in V(B_i\setminus x) \right\}$ induces
a subtree $T'$ of $T$ such that, if $\left(\mathit{b}(u), \mathit{b}(v)\right)$ is an edge in $T'$, then $(u, v) \in E(G)$ - this means that the subgraph $B_i\setminus x$ of $G$ has a spanning tree, which is a copy of $T'$ on the corresponding vertices. Moreover, 
$(T', \mathscr{Y}')$, with $\mathscr{Y}'=(Y_t : t \in V(T'))$ gives a tree decomposition of $B_i$.
\item $G$ has a spanning tree, which is a copy of $T$ on the corresponding vertices; i.e. if $\left(\mathit{b}(u), \mathit{b}(v)\right)$ is an edge in $T$, then $(u, v) \in E(G)$. 
 \end{enumerate}
\newdefinition{definition}{Definition}
\begin{definition}[Nice tree decomposition of an outerplanar graph $G$]
 A tree decomposition $(T, \mathscr{Y})$ of $G$, where $\mathscr{Y}=(Y_t : t \in V(T))$ having properties 1, 2 and 3 above, together with the following additional property, is called a nice tree decomposition of $G$.
\begin{enumerate}
\item[4.] If $y_i$ and $y'_i$ are respectively the last and first vertices of a non-root, non-trivial block $B_i$, then the bag $Y_ {\mathit{b}(y_i)} \in \mathscr{Y}$ contains both $y_i$ and $y'_i$. 
\end{enumerate}
\end{definition}
In the discussion that follows, we will show that any outerplanar graph $G$ has a nice tree decomposition $(T, \mathscr{Y})$ of width at most $3$. Initialize $(T, \mathscr{Y})$ to be the tree decomposition of $G$, constructed using the method proposed by Govindan et al. \cite{Govindan98}, satisfying properties 1, 2 and 3 of nice tree decompositions. We need to modify $(T, \mathscr{Y})$ in such a way that, it satisfies property 4 as well. 

For every non-root, non-trivial block $B_i$ of $G$, do the following. If $y_i$ and $y'_i$ are respectively the last and first vertices of $B_i$, then, for each $t \in \left\{\mathit{b}(v)\mid v\in V(B_i\setminus x) \right\}$, we insert $y'_i$ to $Y_t$, if it is not already present in $Y_t$ and we call $y'_i$ as a \textit{propagated} vertex. Note that, after this modification $Y_{\mathit{b}(y_i)}$ contains both $y_i$ and $y'_i$. 
\newtheorem{claim}{Claim}
\begin{claim} \label{claim1}
 After the modification, $(T, \mathscr{Y})$ remains a tree decomposition of $G$.
\end{claim}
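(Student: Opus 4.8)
The plan is to verify that each of the three defining properties of a tree decomposition survives the modification. Since the only change is inserting the propagated vertex $y'_i$ into some bags $Y_t$ (with $t$ ranging over $\{b(v)\mid v\in V(B_i\setminus x)\}$) for each non-root non-trivial block $B_i$, properties 1 and 2 of a tree decomposition are immediate: property 1 holds because we never remove a vertex from any bag, so $\bigcup_t Y_t$ still equals $V(G)$ — in fact $Y_{b(v)}$ still contains $v$ for every $v$; property 2 holds because every edge of $G$ that was covered by some bag is still covered, as bags only grow. So the entire content of the claim is property 3: for each vertex $w\in V(G)$, the set of nodes $t\in V(T)$ with $w\in Y_t$ must still induce a connected subtree (a subpath-free, but connected, subtree) of $T$.

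First I would observe that vertices $w$ that are never used as a propagated vertex are unaffected — their occurrence sets are unchanged — so I only need to check property 3 for vertices of the form $w=y'_i$, the first vertex of some non-root non-trivial block $B_i$. Fix such a $w$. Before the modification, the node set $S_w^{\mathrm{old}}=\{t: w\in Y_t\}$ induced a subtree $T_w^{\mathrm{old}}$ of $T$, and it contains $b(w)$. Now, $w$ might be the first vertex of several blocks $B_{i_1},B_{i_2},\dots$ simultaneously; for each such block $B_{i_k}$ we add to $S_w$ the node set $N_k:=\{b(v)\mid v\in V(B_{i_k}\setminus x_k)\}$, where $x_k$ is the cut vertex at which $B_{i_k}$ hangs. (Note $w=y'_{i_k}$ is the successor of $x_k$ on the Hamiltonian cycle of $B_{i_k}$, so $w\in V(B_{i_k}\setminus x_k)$ and hence $b(w)\in N_k$.) By property 2 of the Govindan et al. construction, each $N_k$ induces a subtree $T'_k$ of $T$. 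So the new occurrence set is $S_w^{\mathrm{new}}=S_w^{\mathrm{old}}\cup N_1\cup N_2\cup\cdots$, a union of subtrees of $T$, each of which contains the common node $b(w)$. The union of subtrees of a tree that pairwise intersect (here, all share $b(w)$) is again a subtree — this is the Helly property of subtrees of a tree, or simply: a union of connected subgraphs sharing a common vertex is connected. Hence $S_w^{\mathrm{new}}$ induces a connected subtree of $T$, establishing property 3 for $w$.

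I would then remark on the one subtlety worth stating carefully: whether the various node sets $N_k$ attached to a single vertex $w$ are really all present and really all contain $b(w)$. This is exactly where the hypotheses feed in — $w$ being the \emph{first} vertex of $B_{i_k}$ means $w\neq x_k$ and $w\in V(B_{i_k})$, so $w\in V(B_{i_k}\setminus x_k)$, so $b(w)\in N_k$; and property 2 guarantees $N_k$ is connected in $T$. I expect this bookkeeping — making sure that when a single vertex is the first vertex of multiple blocks, all the added bag-sets pass through $b(w)$ — to be the only real obstacle, and it dissolves once we note $b(w)$ lies in every piece. The rest is the standard "growing bags preserves tree-decomposition axioms 1 and 2, and a union of subtrees through a common node is a subtree" argument.
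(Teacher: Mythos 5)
Your proof is correct and follows essentially the same route as the paper: reduce to the third tree-decomposition axiom for the propagated vertices, use property 2 to get connectivity of $\{\mathit{b}(v)\mid v\in V(B_i\setminus x)\}$, property 1 to place $\mathit{b}(y'_i)$ in both pieces, and conclude that a union of subtrees through a common node is a subtree. Your extra care about a vertex being the first vertex of several blocks is harmless added generality (in fact only the block containing $w$ that is closest to the root can have $w$ as its first vertex), and the rest matches the paper's argument.
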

\newproof{proof}{Proof}
\begin{proof}
Clearly, we only need to verify that the third property in the definition of a tree decomposition holds, for all the propagated vertices. Let $y'_i$ be a propagated vertex, which got inserted to the bags corresponding to vertices of $B_i\setminus x$, during the modification. Let $V_{y'_i}=\{t \mid y'_i \in Y_t$, before the modification$\}$ and let $V'_{y'_i}=\{t \mid y'_i \in Y_t$, after the modification$\}$. Then, clearly, $V'_{y'_i} = V_{y'_i}\cup \left\{\mathit{b}(v)\mid v\in V(B_i\setminus x) \right\}$. 

Clearly, the induced subgraph of $T$ on the vertex set $V_{y'_i}$ is connected, since we had a tree decomposition of $G$ before the modification. By property 2 of nice decompositions, the induced subgraph of $T$ on the vertex set $\left\{\mathit{b}(v)\mid v\in V(B_i\setminus x) \right\}$ is also connected. Moreover, by property 1 of nice decompositions, $\mathit{b}(y'_i) \in V_{y'_i}$ and hence, $\mathit{b}(y'_i) \in \left\{\mathit{b}(v)\mid v\in V(B_i\setminus x) \right\} \cap V_{y'_i}$. This implies that the induced subgraph of $T$ on the vertex set $V'_{y'_i}$ is connected. 
\qed
\end{proof}
\begin{claim} \label{claim2}
 After the modification, $(T, \mathscr{Y})$ becomes a nice tree decomposition of $G$ of width at most $3$.
\end{claim}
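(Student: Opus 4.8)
The plan is to check, one at a time, the four defining properties of a nice tree decomposition, using Claim \ref{claim1} to know that $(T,\mathscr{Y})$ is still a tree decomposition after the modification, and then to bound the width. Properties 1 and 3, and the spanning-tree part of Property 2, are immediate: the bijection $b$ and the tree $T$ are untouched by the modification, and bags only gain vertices, so $v\in Y_{b(v)}$ still holds for every $v$. Property 4 holds essentially by construction: for a non-root, non-trivial block $B_i$ with last vertex $y_i$ and first vertex $y'_i$ sharing the parent cut vertex $x$, the vertex $y_i$ lies in $Y_{b(y_i)}$ by Property 1 and, since $y_i\in V(B_i\setminus x)$, the modification step performed for $B_i$ inserts $y'_i$ into $Y_{b(y_i)}$; later steps (for other blocks) never delete vertices, so $Y_{b(y_i)}$ still contains both $y_i$ and $y'_i$ at the end. (Here one should record the small fact that $y_i\neq x$ and $y'_i\neq x$, which holds because the Hamiltonian cycle of a non-trivial block has at least three vertices.)

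The crux is a single structural observation about where a propagated vertex can land. A bag $Y_t$ receives the propagated vertex $y'_i$ exactly when $b^{-1}(t)\in V(B_i)$ and $b^{-1}(t)\neq x$, where $x$ is the cut vertex joining $B_i$ to its parent. On the other hand, if $B_i$ and $B_j$ are distinct blocks sharing a common vertex $v$, then $v$ is a cut vertex and, by the structure of the rooted block tree of $G$, $v$ is the parent cut vertex of all but at most one of the blocks containing it; in particular $v$ is the parent cut vertex of at least one of $B_i,B_j$. I would use this twice.

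First, for the width bound: if a bag $Y_t$ received two distinct propagated vertices $y'_i$ and $y'_j$, then $v:=b^{-1}(t)$ would lie in $V(B_i)\cap V(B_j)$ with $v$ being the parent cut vertex of neither block (that is precisely what "$b^{-1}(t)\neq x$" asserts on both sides), contradicting the observation. Hence every bag gains at most one vertex, and since the original decomposition has width $2$ the modified one has width at most $3$. Second, for the remaining part of Property 2 (that $(T',\mathscr{Y}')$ is still a tree decomposition of $B_i$): the only vertex ever inserted into bags $Y_t$ with $t$ in the subtree $T'$ corresponding to $V(B_i\setminus x)$ is $y'_i$ itself, since a foreign $y'_j$ landing there would again force a vertex shared by $B_i$ and $B_j$ that is the parent cut vertex of neither; and $y'_i$ is a vertex of $B_i$ that gets inserted into \emph{all} bags of the connected subtree $T'$, so connectivity and edge-coverage for $B_i$ are preserved. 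Together with the fact that $T'$ and the relevant edges of $G$ are unchanged, $(T',\mathscr{Y}')$ remains a tree decomposition of $B_i$.

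I expect the main obstacle to be stating and applying the block-tree structural fact cleanly — that a vertex lying in two distinct blocks is a cut vertex which serves as the parent cut vertex of all but possibly one of the blocks through it — and threading the condition $b^{-1}(t)\in V(B_i\setminus x)$ consistently through both applications. Everything else is bookkeeping on top of Claim \ref{claim1}.
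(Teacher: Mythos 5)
Your proposal is correct and follows essentially the same route as the paper: verify the four properties directly (Property 4 holding by construction since $y_i\in V(B_i\setminus x)$), and bound the width by noting that each bag receives at most one propagated vertex because the sets $V(B_i\setminus x)$ over non-root blocks are pairwise disjoint, which is exactly the block-tree fact the paper uses implicitly when it says at most one propagated vertex enters the bags of $B_i\setminus x$. Your write-up merely makes explicit the structural observation (a shared vertex of two blocks is the parent cut vertex of at least one of them) that the paper's terser proof takes for granted.
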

\begin{proof}
It is easy to verify that all the four properties required by nice decompositions are satisfied, after the modification. Moreover, for any block $B_i$, attached to its parent at the cut vertex $x$, at most one (propagated) vertex is getting newly inserted into the bags corresponding to vertices of $B_i \setminus x$. Since the size of any bag in $\mathscr{Y}$ was at most two initially and it got increased by at most one, the new decomposition has width at most three. 
\qed
\end{proof}
From the claims above, we can conclude the following.
\newtheorem{lemma}{Lemma}
\begin{lemma}\label{nicetree}
Every outerplanar graph $G$ has a nice tree decomposition $(T, \mathscr{Y})$ of width $3$, constructible in polynomial time.
\end{lemma}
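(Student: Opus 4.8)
The plan is to read the lemma off the pieces already assembled: the construction of Govindan et al.\ \cite{Govindan98} supplies the starting object, Claims~\ref{claim1} and~\ref{claim2} carry all the structural content, and only a running-time check remains.

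Concretely, I would proceed in the following order. First, invoke \cite{Govindan98} to obtain, in linear time, a width-$2$ tree decomposition $(T,\mathscr{Y})$ of $G$ satisfying properties~1,~2 and~3, together with the bijection $b$. Second, fix a rooted block tree of $G$ (its existence is recorded in Section~\ref{background}) and, for every non-root non-trivial block, the relevant Hamiltonian cycle and hence its first and last vertices; all of this is computable in polynomial time. Third, run the modification exactly as described just before Claim~\ref{claim1}: for each non-root non-trivial block $B_i$, attached to its parent at the cut vertex $x$ and with first vertex $y'_i$, insert $y'_i$ into $Y_t$ for every $t\in\{b(v)\mid v\in V(B_i\setminus x)\}$. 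By Claim~\ref{claim1} the result is still a tree decomposition of $G$, and by Claim~\ref{claim2} it additionally satisfies property~4 and has width at most $3$; hence it is a nice tree decomposition of $G$ of width $3$, as required.

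For the running time, Govindan's algorithm is linear, the block-tree and Hamiltonian-cycle preprocessing is polynomial, and the modification loop performs $O(|V(G)|)$ bag insertions in total. The one point I would take care to spell out — and essentially the only subtlety in the argument — is that the index sets $\{b(v)\mid v\in V(B_i\setminus x)\}$ are pairwise disjoint over distinct non-root blocks $B_i$, since each vertex of $G$ lies in $V(B_i\setminus x)$ for at most one such block (the parent-most block containing it, when that block is non-root). This disjointness is what guarantees that each bag receives at most one new vertex; it is simultaneously the reason the per-propagated-vertex connectivity argument of Claim~\ref{claim1} is enough to re-establish property~3, and the reason the width increases by at most one in Claim~\ref{claim2}. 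With that observed, the lemma follows immediately, and the whole construction runs in polynomial (indeed linear) time.
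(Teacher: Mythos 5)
Your proposal is correct and follows essentially the same route as the paper: start from the width-$2$ decomposition of Govindan et al.\ with properties 1--3, propagate the first vertex of each non-root non-trivial block into the bags indexed by $\{b(v)\mid v\in V(B_i\setminus x)\}$, and invoke Claims~\ref{claim1} and~\ref{claim2} to conclude. Your explicit remark that these index sets are pairwise disjoint across distinct non-root blocks (each vertex lies in $V(B_i\setminus x)$ only for its parent-most block) is exactly the implicit justification behind the paper's width bound in Claim~\ref{claim2}, so nothing is missing.
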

\begin{definition}[Nice path decomposition of an outerplanar graph]
 Let $(\mathcal{P}, \mathscr{X})$ be a path decomposition of an outerplanar graph $G$. If, for every non-root non-trivial block $B_i$, there is a bag $X_t \in \mathscr{X}$ containing both the first and last vertices of $B_i$ together, then $(\mathcal{P}, \mathscr{X})$ is called a nice path decomposition of $G$.
\end{definition}
\begin{lemma}\label{nicepath}
Let $G$ be an outerplanar graph with $\operatorname{pw}(G)=p$. A nice path decomposition $(\mathcal{P}, \mathscr{X})$ of $G$, of width at most $4p+3$, is constructible in polynomial time. 
\end{lemma}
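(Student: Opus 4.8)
The plan is to build the required nice path decomposition directly from the nice tree decomposition $(T,\mathscr{Y})$ of width $3$ supplied by Lemma~\ref{nicetree}, by ``flattening'' it along a path decomposition of the tree $T$ itself. The key observation is that, by property~3 in the definition of a nice tree decomposition, $T$ is a copy of a spanning tree of $G$; in particular $T$ is isomorphic to a subgraph of $G$, so $\operatorname{pw}(T)\le\operatorname{pw}(G)=p$. Since the pathwidth of a tree, together with an optimal path decomposition attaining it, can be computed in polynomial time, I would fix a path decomposition $(\mathcal{P},\mathscr{Z})$ of $T$ of width at most $p$, with $\mathscr{Z}=(Z_j:j\in V(\mathcal{P}))$, where each $Z_j\subseteq V(T)$ and $|Z_j|\le p+1$.

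Next, for each $j\in V(\mathcal{P})$ define the bag $X_j=\bigcup_{t\in Z_j}Y_t\subseteq V(G)$, and set $\mathscr{X}=(X_j:j\in V(\mathcal{P}))$. I would then verify that $(\mathcal{P},\mathscr{X})$ is a path decomposition of $G$. Coverage of vertices and edges is immediate: every vertex of $G$ lies in some $Y_t$ and every $t\in V(T)$ lies in some $Z_j$; and for an edge $(u,v)\in E(G)$ there is a $t$ with $u,v\in Y_t$, whence $u,v\in X_j$ for any $Z_j\ni t$. For the interpolation property, fix $v\in V(G)$: the set $\{t\in V(T):v\in Y_t\}$ induces a subtree $T_v$ of $T$ (since $(T,\mathscr{Y})$ is a tree decomposition), and $v\in X_j$ precisely when $Z_j\cap V(T_v)\ne\emptyset$; the standard fact that, in a path decomposition, the bags meeting a connected subgraph form a contiguous interval then shows that $\{j:v\in X_j\}$ is an interval of $\mathcal{P}$. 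The width bound is a direct count: $|X_j|\le\sum_{t\in Z_j}|Y_t|\le 4\,|Z_j|\le 4(p+1)$, so the width of $(\mathcal{P},\mathscr{X})$ is at most $4p+3$.

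It remains to check that $(\mathcal{P},\mathscr{X})$ is \emph{nice}. Let $B_i$ be a non-root, non-trivial block with last vertex $y_i$ and first vertex $y'_i$. By property~4 of the nice tree decomposition, $Y_{\mathit{b}(y_i)}$ contains both $y_i$ and $y'_i$; choosing any $j$ with $\mathit{b}(y_i)\in Z_j$ (which exists since $(\mathcal{P},\mathscr{Z})$ covers $V(T)$) gives $X_j\supseteq Y_{\mathit{b}(y_i)}\supseteq\{y_i,y'_i\}$. Hence every non-root non-trivial block has a bag containing both of its first and last vertices, so $(\mathcal{P},\mathscr{X})$ is a nice path decomposition, and the whole construction runs in polynomial time.

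The only genuinely technical point I anticipate is the interpolation step for $(\mathcal{P},\mathscr{X})$ — namely the lemma that in a path decomposition the set of bags intersecting a connected subgraph is an interval (which can be proved by walking along a path in the subgraph and using that the index sets of consecutive vertices overlap). Everything else is bookkeeping: the width estimate is just the product of the tree-decomposition bag size ($\le 4$) and the path-decomposition bag size of the spanning tree $T$ ($\le p+1$), and the niceness property is handed to us almost verbatim by property~4 of the nice tree decomposition.
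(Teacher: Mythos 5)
Your proposal is correct and follows essentially the same route as the paper: flatten the width-$3$ nice tree decomposition along an optimal path decomposition of the spanning tree $T$, take unions of bags, bound the width by $4(p+1)-1$, and inherit the niceness from property~4 via the bag $Y_{\mathit{b}(y_i)}$. The interpolation argument you spell out is exactly the step the paper delegates to \cite{Govindan98}, so there is no substantive difference.
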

\begin{proof}
Let $(T, \mathscr{Y})$, with $\mathscr{Y}=(Y_{v_{_T}} : v_{_T} \in V(T))$ be a nice tree decomposition of $G$ of width $3$, obtained using Lemma \ref{nicetree}. Obtain an optimal path decomposition $(\mathcal{P}_T, \mathscr{X}_T)$ of the tree $T$ in polynomial time, using a standard algorithm (For example, the algorithm from \cite{Skodinis2003}). Since $T$ is a spanning tree of $G$, the pathwidth of $T$ is at most that of $G$. Therefore, the width of the path decomposition $(\mathcal{P}_T, \mathscr{X}_T)$ is at most $p$; i.e. there are at most $p+1$ vertices of $T$ in each bag $X_{T_i} \in \mathscr{X}_T$.

Let $\mathcal{P}= \mathcal{P}_T$ and for each $X_{T_i} \in \mathscr{X}_T$, we define $X_{i} = \bigcup_{v_{_T} \in X_{T_i}}{Y_{v_{_T}}}$. It is not difficult to show that $(\mathcal{P}, \mathscr{X})$, with $\mathscr{X}=(X_1,\ldots, X_{|V(\mathcal{P}_T)|})$, is a path decomposition of $G$ (See \cite{Govindan98}). The width of this path decomposition is at most $4(p+1)-1=4p+3$, since $|Y_{v_{_T}}| \le 4$, for each bag $Y_{v_{_T}} \in \mathscr{Y}$ and $|X_{T_{i}}| \le p+1$, for each bag $X_{T_{i}} \in \mathscr{X}_T$. 

Let $B_i$ be a non-root, non-trivial block  in $G$ and $y_i$ and $y'_i$ respectively be the first and last vertices of $B_i$. Since $\mathit{b}(y_i)$ is a vertex of the tree $T$, there is some bag $X_{T_j} \in \mathscr{X}_T$, containing $\mathit{b}(y_i)$. The bag $Y_ {\mathit{b}(y_i)}\in \mathscr{Y}$ contains both $y_i$ and $y'_i$, since $(T, \mathscr{Y})$ is a nice tree decomposition of $G$. It follows from the definition of $X_j$ that $X_j \in \mathscr{X}$ contains both $y_i$ and $y'_i$. Therefore, $(\mathcal{P}, \mathscr{X})$ is a nice path decomposition of $G$.
\qed
\end{proof}
\section{Edge addition without spoiling the outerplanarity}
In this section, we prove some technical lemmas which will be later used to prove that the intermediate graph $G'$ obtained in Stage 2 and the $2$-vertex-connected graph $G''$ obtained in Stage 3 are outerplanar. To get an intuitive understanding of these lemmas, the reader may refer to Fig \ref{fig1}. Recall that, when we use the term \textit{path}, it always refers to a simple path. \vspace{-0.15cm}
\begin{lemma}\label{extensionlemma}
 Let $G(V, E)$ be a connected outerplanar graph. Let $u$ and $v$ be two distinct non-adjacent vertices in $G$ and let $P= (u=x_0, x_1, x_2, \ldots, x_k, x_{k+1}=v)$ where $k\ge 1$ be a path in $G$ such that:\vspace{-0.15cm}
\begin{description}
 \item[(i)] $P$ shares at most one edge with any block of $G$. \vspace{-0.15cm}
 \item[(ii)] For $0 \le i \le k$, if the block containing the edge $(x_i, x_{i+1})$ is non-trivial, then $x_{i+1}$ is the successor of $x_i$ in the Hamiltonian cycle of that block.
\end{description}\vspace{-0.15cm}
 Then the graph $G'(V, E')$, where $E'=E \cup \{(u, v)\}$, is outerplanar.
\end{lemma}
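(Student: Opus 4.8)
The plan is to build an outerplanar embedding of $G'$ by modifying a fixed outerplanar embedding of $G$ along the path $P$. Recall that a graph is outerplanar iff it has no $K_4$ or $K_{2,3}$ minor; equivalently, we may work directly with a planar embedding in which every vertex lies on the outer face. I would fix such an embedding of $G$ and argue that the new edge $(u,v)$ can be routed through the outer face, closely following the path $P$, without crossing any edge of $G$ and without separating any vertex from the outer face.

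The key structural observation to exploit is condition (i): $P$ meets each block in at most one edge, so $P$ traverses a sequence of blocks $B_0, B_1, \ldots, B_k$ (one per edge $(x_i,x_{i+1})$), passing through the shared cut vertices $x_1, \ldots, x_k$. The natural approach is induction on $k$, peeling off one block at a time. For the base case $k=1$, $u$ and $v$ lie in a single block $B_0$. If $B_0$ is trivial, $(u,v)$ is already an edge, contradicting non-adjacency, so $B_0$ is non-trivial; then by condition (ii), $v$ is the successor of $u$ on the Hamiltonian cycle of $B_0$ — but consecutive vertices on the Hamiltonian cycle of a $2$-connected outerplanar block are adjacent, again contradicting non-adjacency. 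Hence I would instead take as base case $k=0$ (which is excluded, $k\ge 1$) or, more carefully, observe that the real content is: adding the single chord $(x_i, x_{i+1}')$ — where we think of successively short-cutting — preserves outerplanarity, and then compose. Concretely, I would argue that for any single block $B$ and any two vertices $a,b$ on its Hamiltonian cycle with $b$ a (not necessarily immediate) successor of $a$ in the clockwise order, adding the chord $(a,b)$ keeps $B$ outerplanar: this is immediate since a chord between two vertices of the Hamiltonian cycle of an outerplanar graph can always be drawn inside, and outerplanarity of a $2$-connected graph with a fixed Hamiltonian cycle is equivalent to its non-Hamiltonian-cycle edges forming a set of non-crossing chords.

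For the inductive step, suppose the path is $P = (u = x_0, x_1, \ldots, x_{k+1} = v)$. Let $x_1$ be the first cut vertex encountered. The edge $(u, x_1)$ lies in block $B_0$, and by condition (ii) $x_1$ is the successor of $u$ in $B_0$ (if $B_0$ is non-trivial). Consider the path $P' = (x_1, x_2, \ldots, x_{k+1} = v)$, which still satisfies (i) and (ii) and has one fewer edge; if $x_1$ and $v$ are non-adjacent, by induction $G \cup \{(x_1, v)\}$ is outerplanar, and it remains to "transfer" this chord from $x_1$ to $u$ across $B_0$. Since $u$ and $x_1$ are consecutive on the Hamiltonian cycle of $B_0$ and $x_1$ is a cut vertex, everything attached to $G$ through $x_1$ (in particular the route of the chord to $v$) hangs in the region of the plane "beyond" $x_1$; re-routing the chord to start at $u$ instead, running alongside the edge $(u,x_1)$, keeps all relevant vertices on the outer face. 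The degenerate cases where $x_1 = v$ or $x_1$ is adjacent to $v$ should be handled separately but are easy.

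The main obstacle I anticipate is making the plane-topology argument rigorous: "re-routing the new edge alongside $P$ through the outer face" is geometrically obvious from a picture (Fig.~\ref{fig1}), but a clean proof needs either a careful induction that maintains an explicit outerplanar embedding, or a minor-based argument showing no $K_4$ or $K_{2,3}$ minor is created. I would lean toward the inductive embedding approach, using the block structure and condition (ii) to control exactly which face the new edge is drawn in, since condition (ii) (successor, not predecessor) is precisely what pins down the correct side of each block's Hamiltonian cycle and prevents the chord from being forced to cross an existing edge.
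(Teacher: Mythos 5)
Your overall strategy (maintain an explicit outerplanar embedding and route the new edge through the outer face alongside $P$) is genuinely different from the paper's proof, which never touches embeddings: it supposes a subdivision $H'$ of $K_4$ or $K_{2,3}$ appears after adding $(u,v)$, uses the fact that each internal vertex $x_i$ of $P$ is a cut vertex (so every $u$--$v$ path in $G$ must visit $x_1,\ldots,x_k$ in order) to force all the $x_i$ onto the branch path of $H'$ containing $(u,v)$, and then shortcuts that path using an edge $(z_1,z_2)$ of $G$ to exhibit a forbidden subdivision already in $G$; the $K_{2,3}$ case additionally uses that two consecutive Hamiltonian-cycle vertices of a $2$-connected outerplanar block have exactly two internally disjoint paths between them. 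However, as written your proposal has concrete gaps. The auxiliary claim you reduce to --- that adding a chord $(a,b)$ between any two vertices of the Hamiltonian cycle of a $2$-connected outerplanar block preserves outerplanarity --- is false: in the cycle $(1,2,3,4,5)$ with chord $(1,3)$, adding the chord $(2,5)$ creates a $K_4$ subdivision on $1,2,3,5$ (using the path $3,4,5$). A new chord can cross existing chords, which is exactly what your own ``non-crossing chords'' characterization forbids; the lemma's hypotheses avoid this because each edge $(x_i,x_{i+1})$ of $P$ lies on the outer (Hamiltonian) cycle of its block and the new edge hugs the outer boundary rather than being an interior chord. Your base case is also off: for $k=1$, condition (i) forces $(u,x_1)$ and $(x_1,v)$ into different blocks, so $u$ and $v$ do not lie in a common block.

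The inductive step is likewise incomplete. Outerplanarity of $G+(x_1,v)$ (from the induction hypothesis) does not by itself yield outerplanarity of $G+(u,v)$; the ``transfer of the chord from $x_1$ to $u$ across $B_0$'' is the entire content of the lemma under this approach, and you leave it at the level of ``everything attached through the cut vertex $x_1$ hangs beyond $x_1$, so re-route alongside $(u,x_1)$.'' Making that precise requires either constructing and maintaining an explicit outerplanar embedding, specifying on which side of each block the new edge runs (this is where condition (ii), successor rather than predecessor, must be invoked), or abandoning the embedding picture for a forbidden-subdivision argument as the paper does. You anticipate this obstacle yourself, but since the crucial topological step is asserted rather than proved and the single-block chord claim is false, the proposal does not yet constitute a proof.
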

\clearpage
\begin{figure}
 \begin{center}
\includegraphics[scale=0.72]{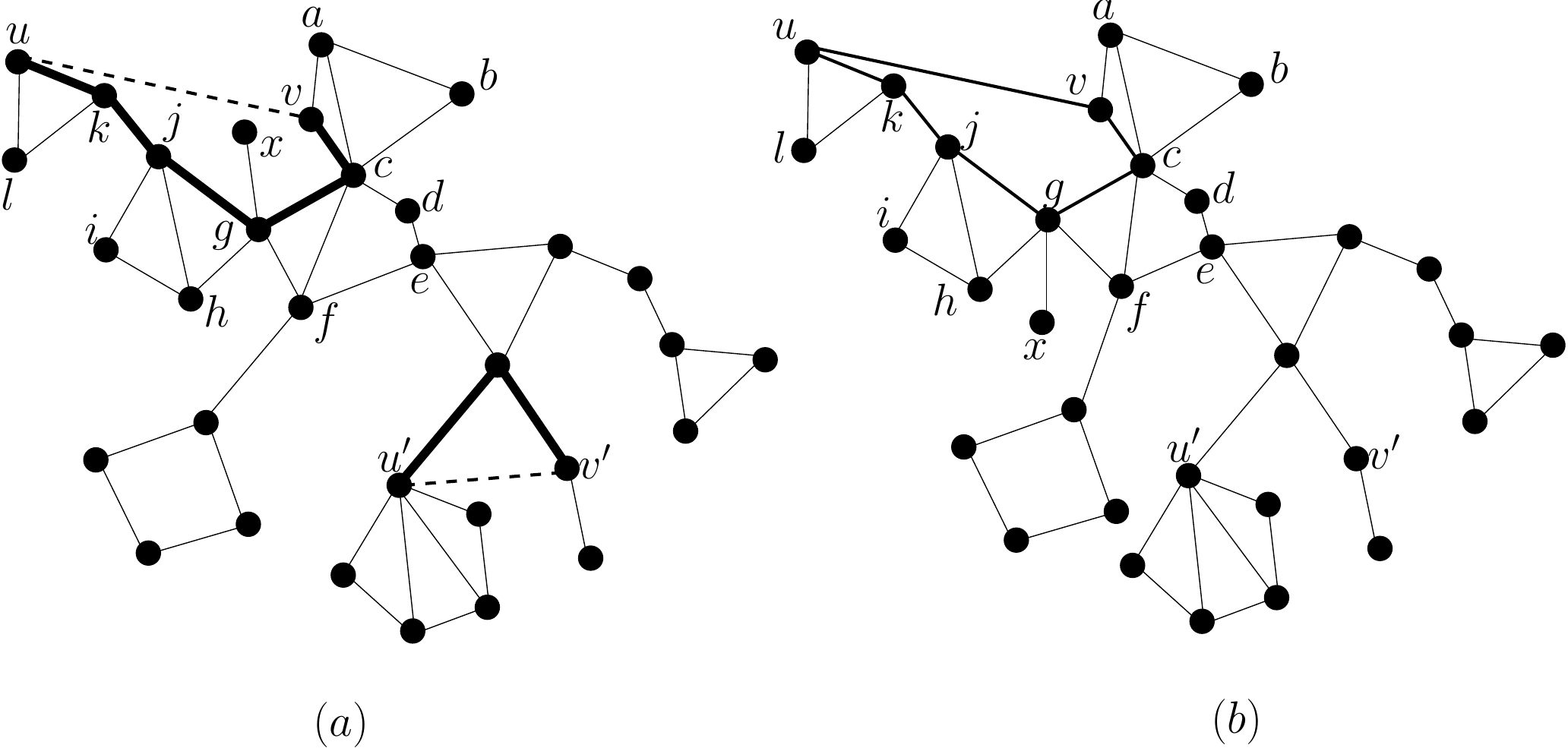}
\caption{(a) The path between $u$ and $v$ and the path between $u'$ and $v'$ (shown in thick edges) satisfy the conditions stated in Lemma \ref{extensionlemma}. According to Lemma \ref{extensionlemma}, on adding any one of the dotted edges $(u, v)$ or $(u', v')$, the resultant graph is outerplanar. (b) An outerplanar drawing of the resultant graph, after adding the edge $(u, v)$. In this graph, $u, v, a, b, c, d, e, f, g, h, i, j, k, l, u$ is the Hamiltonian cycle of the new block formed.}
\label{fig1}
\end{center}
\end{figure}
\begin{proof}
It is well known that a graph $G$ is outerplanar if and only if it contains no subgraph that is a subdivision of $K_4$ or $K_{2,3}$ \cite{Chartrand67}. Consider a path $P$ between $u$ and $v$ as stated in the lemma. 
\newtheorem{property}{Property}
\begin{property}\label{mustappear}
 In every path in $G$ from $u$ to $v$, vertices $x_1, \ldots, x_k$ should appear and for $0 \le i \le k$, $x_i$ should appear before $x_{i+1}$ in any such path.
\end{property}
\begin{proof}
For any $1 \le i \le k$, the two consecutive edges $e_i=(x_{i-1}, x_i)$ and $e_{i+1}=(x_i, x_{i+1})$ of the path $P$ belong to two different blocks of $G$, by assumption. Therefore, each internal vertex $x_i$, $1 \le i \le k$, is a cut vertex in $G$. As a result, in every path in $G$ between $u$ and $v$, vertices $x_1, \ldots, x_k$ should appear and for $0 \le i \le k$, $x_i$ should appear before $x_{i+1}$ in any such path.
\qed
\end{proof}
\begin{property}\label{propdisjoint}
For any $0 \le i \le k$, there are at most two internally vertex disjoint paths in $G$ between $x_i$ and $x_{i+1}$.
\end{property}
\begin{proof}
Any path from $x_i$ to $x_{i+1}$ lies fully inside the block $B_i$ that contains the edge $(x_i, x_{i+1})$. If $B_i$ is trivial, the only path from $x_i$ to $x_{i+1}$ is the direct edge between them. If this is not the case, $B_i$ is $2$-vertex-connected. It is a property of $2$-vertex-connected outerplanar graphs that, the number of internally vertex disjoint paths between any two consecutive vertices of the Hamiltonian cycle of the graph, is exactly two  (See Appendix). By the assumption of Lemma \ref{extensionlemma}, if $B_i$ is non-trivial, then $x_{i+1}$ is the successor of $x_i$ in the Hamiltonian cycle of $B_i$. Hence, the property follows. 
\qed
\end{proof}
We will show that if $G'$ is not outerplanar, then $G$ also was not outerplanar, which is a contradiction. Assume that $G'$ is not outerplanar. This implies that there is a subgraph $H'$ of $G'$ that is a subdivision of $K_4$ or $K_{2,3}$. Since $G$ does not have a subgraph that is a subdivision of $K_4$ or $K_{2,3}$, $H'$ cannot be a subgraph of $G$. Hence, the new edge $(u, v)$ should be an edge in $H'$ and all other edges of $H'$ are edges of $G$.\newline\newline
Case 1. $H'$ is a subdivision of $K_4$. 

Let $k_1, k_2, k_3$ and $k_4$ denote the four vertices of $H'$ that correspond to the vertices of $K_4$. We call them as branch vertices of $H'$. For $i, j \in \{1, 2, 3, 4\}$, $i \ne j$, let $P_{i, j}$ denote the path in $H'$ from the branch vertex $k_i$ to the branch vertex $k_j$, such that each intermediate vertex of the path is a degree two vertex in $H'$. Without loss of generality, assume that the edge $(u, v)$ is part of the path $P_{1, 2}$ of $H'$. 
\begin{claim}
All of the vertices $x_1, \ldots, x_k$ appear in $P_{1, 2}$. The order $<$ in which the vertices $u$, $v$, $x_1, \ldots, x_k$ appear in $P_{1, 2}$ should be one of the following three: (without loss of generality, assuming $u <v$): (1) $u <v<x_k<x_{k-1}<\cdots< x_1$\\(2) $x_k <x_{k-1} < \cdots< x_1<u<v$ (3) $x_j < x_{j-1} < \cdots< x_1< u < v < x_k \cdots <x_{j+1}$ for some $j \in \{1, 2, \ldots, k-1\}$. 
\end{claim}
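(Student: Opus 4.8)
The plan is to lean entirely on Property~\ref{mustappear}: it forces \emph{every} $u$--$v$ path of $G$ to pass through $x_1,\dots,x_k$ in exactly this order, so it suffices to exhibit, inside the $K_4$-subdivision $H'$, a couple of well-chosen $u$--$v$ paths of $G$ and read off what the property says about each of them. First I would set up these detour paths. Since $(u,v)$ is an edge of $P_{1,2}$, deleting it splits $P_{1,2}$ into two sub-paths; after relabelling $k_1$ and $k_2$ if necessary so that $u$ lies in the component containing $k_1$, call them $Q_1$ (from $k_1$ to $u$) and $Q_2$ (from $v$ to $k_2$). Both are paths of $G$. For $\ell\in\{3,4\}$, let $\pi_\ell$ be the walk running from $u$ backwards along $Q_1$ to $k_1$, then along $P_{1,\ell}$ to $k_\ell$, then along $P_{\ell,2}$ to $k_2$, then backwards along $Q_2$ to $v$. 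Because distinct paths $P_{i,j}$ of a $K_4$-subdivision meet only at the branch vertices they share, each $\pi_\ell$ is a genuine \emph{simple} $u$--$v$ path of $G$; it visits $k_1,k_\ell,k_2$ once each and avoids the fourth branch vertex together with the interior of every $P_{i,j}$ whose endpoints are not both among $k_1,k_\ell,k_2$.

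Next I would show that all of $x_1,\dots,x_k$ lie on $P_{1,2}$. By Property~\ref{mustappear}, each $x_i$ lies on both $\pi_3$ and $\pi_4$, hence $x_i\in V(\pi_3)\cap V(\pi_4)$. But everything $\pi_3$ and $\pi_4$ contain beyond $V(Q_1)\cup V(Q_2)$ lies respectively on the $k_3$-detour $P_{1,3}\cup P_{3,2}$ and the $k_4$-detour $P_{1,4}\cup P_{4,2}$ (branch vertices $k_3$, $k_4$ included), and these two are disjoint by the $K_4$-subdivision structure; so $V(\pi_3)\cap V(\pi_4)=V(Q_1)\cup V(Q_2)=V(P_{1,2})$. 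Thus every $x_i$ is a vertex of $P_{1,2}$, and, being distinct from $u$ and $v$, each $x_i$ lies strictly on the $k_1$-side of the edge $(u,v)$ (inside $Q_1$) or strictly on the $k_2$-side (inside $Q_2$).

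Finally I would pin down the order by traversing $\pi_3$. Orient $P_{1,2}$ from $k_1$ to $k_2$; then $u$ precedes $v$, which is the normalisation in the claim. Walking $\pi_3$ from $u$ to $v$, one first runs back along $Q_1$, meeting precisely the $x_i$ lying in $Q_1$ in the reverse of their order along $P_{1,2}$; then crosses the $k_3$-detour, which carries no $x_i$; then runs back along $Q_2$, meeting precisely the $x_i$ lying in $Q_2$, again reversed. Property~\ref{mustappear} says the $x_i$ are met in the order $x_1,x_2,\dots,x_k$; hence for some $j\in\{0,1,\dots,k\}$ the $x_i$ in $Q_1$ are exactly $x_1,\dots,x_j$ and those in $Q_2$ are exactly $x_{j+1},\dots,x_k$, and along $P_{1,2}$ oriented $k_1\to k_2$ the vertices $u,v,x_1,\dots,x_k$ occur in the order $x_j,x_{j-1},\dots,x_1,u,v,x_k,x_{k-1},\dots,x_{j+1}$. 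The sub-cases $j=k$, $j=0$, and $0<j<k$ are respectively orders $(2)$, $(1)$, and $(3)$ of the claim.

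The only step that calls for genuine care is verifying that each concatenation $\pi_\ell$ is a \emph{simple} path rather than merely a walk — this is exactly where the defining property of a $K_4$-subdivision (distinct $P_{i,j}$ share nothing but common branch vertices) is used — together with the bookkeeping of the degenerate positions $u=k_1$ or $v=k_2$ (which just collapse $Q_1$ or $Q_2$ to a single vertex) and the possibility that some $x_i$ coincides with $k_1$ or $k_2$ (harmless: such an $x_i$ still lies on $P_{1,2}$ and is met by $\pi_3$ at the appropriate junction). Everything after that is a direct application of Property~\ref{mustappear}.
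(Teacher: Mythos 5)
Your proposal is correct and follows essentially the same route as the paper: both arguments rest on Property~\ref{mustappear} applied to $u$--$v$ detours inside $H'\setminus(u,v)$ (the paper invokes such detours implicitly to show each $x_i$ lies on $P_{1,2}$ and then reads off the order from the path through $k_3$, which is exactly your $\pi_3$). Your version merely makes the detours $\pi_3,\pi_4$ and the bookkeeping of $Q_1,Q_2$ explicit, which is a fair elaboration of the paper's terser argument rather than a different approach.
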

\begin{proof}
Suppose $x_i$, $1 \le i \le k$, does not belong to the path $P_{1, 2}$. Then, there is a path in $H'\setminus (u, v)$ between vertices $u$ and $v$, avoiding the vertex $x_i$, since $H'$ is a subdivision of $K_4$. Since $H'\setminus (u, v)$ is a subgraph of $G$, this implies that there is a path in $G$, between $u$ and $v$ that avoids $x_i$. This is a contradiction to Property \ref{mustappear}. Therefore, $x_i \in P_{1, 2}$. Notice that there is a path in $H'\setminus (u, v)$, and hence in $G$, between $u$ and $v$ that goes through the vertex $k_3$. To satisfy Property \ref{mustappear}, $x_i$ should appear before $x_{i+1}$, for $0 \le i \le k$, in this path. Hence, one of the orderings mentioned in the claim should happen in $P_{1, 2}$. 
\qed
\end{proof}
Let us denote the first vertex in the ordering $<$ by $z_1$ and the last vertex in the ordering $<$ by $z_2$. (In the first case, $z_1 = u$ and $z_2 =x_1$. In the second case, $z_1=x_k$ and $z_2=v$. In the third case, $z_1=x_j$ and $z_2 =x_{j+1}$.) In all the three cases of the ordering, there is a direct edge in $G$, between $z_1$ and $z_2$. Notice that in any of these three possible orderings, we do not have $z_1=u$ and $z_2=v$ simultaneously. Since $(z_1, z_2) \ne (u, v)$, by deleting the intermediate vertices between $z_1$ and $z_2$ from the path $P_{1, 2}$ and including the direct edge between $z_1$ and $z_2$, we get a path $P'_{1, 2}$ between $k_1$ and $k_2$ in $G$. All vertices in $P'_{1, 2}$ are from the vertex set of $P_{1, 2}$. Therefore, by replacing the path $P_{1, 2}$ in $H'$ by $P'_{1, 2}$, we get a subgraph $H$ of $G$ that is a subdivision of $K_4$. This means that $G$ is not outerplanar, which is a contradiction. Therefore, $H'$ cannot be a subdivision of $K_4$.\newline\newline 
Case 2. $H'$ is a subdivision of $K_{2,3}$. 

As earlier, let $k_1, k_2, k_3, k_4$ and $k_5$ denote the branch vertices of $H'$ that correspond to the vertices of $K_{2, 3}$. Let $k_1, k_3, k_5$ be the degree $2$ branch vertices in $H'$ and $k_2, k_4$ be the degree $3$ branch vertices of $H'$. For $i \in \{1, 3, 5\}$ and $j \in \{2, 4\}$, let $P_{i, j}$ denote the path in $H'$ from vertex $k_i$ to vertex $k_j$, such that each intermediate vertex of the path is a degree two vertex in $H'$. Also, for $i \in \{1, 3, 5\}$ and $j \in \{2, 4\}$ let $P_{j, i}$ denote the path from $j$ to $i$ in which the vertices in $P_{j, i}$ appear in the reverse order compared to $P_{i, j}$. Without loss of generality, assume that the edge $(u, v)$ is part of the path $P_{1, 2}$ of $H'$. Let $P_{4, 1, 2}$ denote the path in $H'$ between vertices $k_4$ and $k_2$, obtained by concatenating the paths $P_{4, 1}$ and $P_{1, 2}$. 
\begin{claim}
All of the vertices $x_1, \ldots, x_k$ appear in $P_{4, 1, 2}$. The order $<$ in which the vertices $u$, $v$, $x_1, \ldots, x_k$ appear in $P_{4, 1, 2}$ should be one of the following three (without loss of generality, assuming $u<v$): (1) $u <v<x_k<x_{k-1}<\cdots< x_1$\\(2) $x_k <x_{k-1} < \cdots< x_1<u<v$ (3) $x_j < x_{j-1} < \cdots< x_1< u < v < x_k \cdots <x_{j+1}$ for some $j \in \{1, 2, \ldots, k-1\}$. 
\end{claim}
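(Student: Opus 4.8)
The plan is to reproduce the argument from the analogous claim in Case~1, with $P_{4,1,2}$ playing the role of $P_{1,2}$ and the two $k_4$--$k_2$ detours through $k_3$ and through $k_5$ playing the role of the two $k_1$--$k_2$ detours used there. The first step is to show that every $x_i$, $1\le i\le k$, lies on $P_{4,1,2}$. Since $k_1,k_2,k_4$ all lie on $P_{4,1,2}$, we may assume $x_i\notin\{k_1,k_2,k_4\}$. Writing $P_{1,2}$ as the concatenation of a sub path from $k_1$ to $u$, the edge $(u,v)$, and a sub path from $v$ to $k_2$ (this is WLOG; otherwise interchange the names $u$ and $v$), we obtain two simple $u$--$v$ paths in $H'\setminus (u,v)$: one following $u\to k_1$ along $P_{1,2}$, then $k_1\to k_4$ along $P_{4,1}$, then $k_4\to k_3\to k_2$ along $P_{4,3}$ and $P_{3,2}$, then $k_2\to v$ along $P_{1,2}$; and a second identical to it except that it goes $k_4\to k_5\to k_2$ along $P_{4,5}$ and $P_{5,2}$. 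These are simple because the corresponding edge sequences in $K_{2,3}$ are internally disjoint, and the two routes share no vertex outside $P_{4,1,2}$. Hence if $x_i\notin P_{4,1,2}$, one of the two routes avoids $x_i$; that route is a $u$--$v$ path in $G$, contradicting Property~\ref{mustappear}. So $x_1,\dots,x_k$ all lie on $P_{4,1,2}$, and since $(u,v)$ is an edge of $P_{1,2}\subseteq P_{4,1,2}$, so do $u$ and $v$.

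For the order, fix one of the two routes above, say $Q$, the one through $k_3$. It is a path of $G$, so by Property~\ref{mustappear} the vertices $x_0=u,x_1,\dots,x_k,x_{k+1}=v$ appear on $Q$ in this order. Tracing $Q$, one encounters the vertices of $P_{4,1,2}$ exactly as: the sub path of $P_{1,2}$ from $u$ to $k_1$, then $P_{4,1}$ from $k_1$ to $k_4$ (that is, $P_{4,1,2}$ read backwards on this stretch), then a detour through $k_3$ to $k_2$ containing no $x_i$ (all $x_i$ lie on $P_{4,1,2}$), then the sub path of $P_{1,2}$ from $k_2$ to $v$. Matching the forced sequence $u,x_1,\dots,x_k,v$ against this traversal, one reads off that along $P_{4,1,2}$, in a suitable direction, the vertices $u,v,x_1,\dots,x_k$ appear as $x_j,x_{j-1},\dots,x_1,u,v,x_k,x_{k-1},\dots,x_{j+1}$ for some $0\le j\le k$; the values $j=0$ and $j=k$ give the first two claimed orderings and $1\le j\le k-1$ gives the third.

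I expect the main point to watch is in the first step: one must be sure that the $k_3$-route and the $k_5$-route can be taken internally vertex-disjoint outside $P_{4,1,2}$, which is exactly the fact that $K_{2,3}$ has two internally disjoint $k_4$--$k_2$ paths avoiding $k_1$ (through $k_3$ and through $k_5$). Everything else --- tracking which sub paths are read forwards and which backwards in $Q$, and the degenerate cases in which $u$ or $v$ is a branch vertex or $j\in\{0,k\}$ --- is routine bookkeeping identical to that in Case~1.
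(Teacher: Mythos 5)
Your proposal is correct and follows essentially the same route the paper intends: the paper proves this claim by saying it is "similar to Case 1," and you carry out exactly that adaptation, using the two internally disjoint $k_4$--$k_2$ detours of the $K_{2,3}$ subdivision (through $k_3$ and through $k_5$) to force every $x_i$ onto $P_{4,1,2}$ via Property \ref{mustappear}, and then reading the ordering off one such detour route just as the paper does with the path through $k_3$ in Case 1. The extra bookkeeping you supply (simplicity of the two routes, disjointness outside $P_{4,1,2}$, the degenerate positions $j=0$ and $j=k$) is sound and only fills in details the paper leaves implicit.
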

This can be proved in a similar way as in Case 1. The remaining part of the proof is also similar. Let us denote the first vertex in the ordering $<$ by $z_1$ and the last vertex in the ordering $<$ by $z_2$. Repeating similar arguments as in Case 1, we can prove that by deleting the intermediate vertices between $z_1$ and $z_2$ from the path $P_{4, 1, 2}$ and including the direct edge between $z_1$ and $z_2$, we get a path $P'_{4, 1, 2}$ between $k_4$ and $k_2$ in $G$. All vertices in $P'_{4, 1, 2}$ are from the vertex set of $P_{4, 1, 2}$. Therefore, by replacing the path $P_{4, 1, 2}$ in $H'$ by $P'_{4, 1, 2}$, we get a subgraph $H$ of $G$. If $P'_{4, 1, 2}$ has at least one intermediate vertex, the subgraph $H$ of $G$, obtained by replacing the path $P_{4, 1, 2}$ in $H'$ by $P'_{4, 1, 2}$, is a subdivision of $K_{2, 3}$, where an intermediate vertex of $P'_{4, 1, 2}$ takes the role of the branch vertex $k_1$. This contradicts the assumption that $G$ is outerplanar. 

Therefore, assume that $P'_{4, 1, 2}$ has no intermediate vertices, i. e., $z_1=k_4$ and $z_2=k_2$. In the first case of ordering $<$ mentioned in the claim above, we have $k_4=z_1=u=x_0$ and $k_2=z_2=x_1$. In the second case, $k_4=z_1=x_k$ and $k_2=z_2=v=x_{k+1}$. In the third case, $k_4=z_1=x_j$ and $k_2=z_2=x_{j+1}$. In each of these cases, by Property \ref{propdisjoint}, there can be at most two vertex disjoint paths in $G$ between $z_1$ and $z_2$. But, in all these cases, there is a direct edge between $k_4=z_1$ and $k_2=z_2$ in $G$. Since $H'$ is a subdivision of $K_{2, 3}$, other than this direct edge, in $H'\setminus (u, v)$ there are two other paths from $k_4=z_1$ to $k_2=z_2$ that are internally vertex disjoint and containing at least one intermediate vertex. This will mean that there are at least three internally vertex disjoint paths from $z_1=k_4$ to $z_2=k_2$ in $G$, which is a contradiction. Therefore, $H'$ cannot be a subdivision of $K_{2, 3}$.

Since, $G'$ does not contain a subgraph $H'$ that is a subdivision of $K_4$ or $K_{2,3}$, $G'$ is outerplanar.
\qed
\end{proof}
The following lemma explains the effect of the addition of an edge $(u, v)$ as mentioned in Lemma \ref{extensionlemma}, to the block structure and the Hamiltonian cycle of each block. Assume that for $0\le i \le k$, the edge $(x_i, x_{i+1})$ belongs to the block $B_i$. 
\begin{lemma}\label{Hamiltonianunchanged}
\begin{enumerate}
 \item Other than the blocks $B_0$ to $B_k$ of $G$ merging together to form a new block $B'$ of $G'$, blocks in $G$ and $G'$ are the same.
 \item Vertices in blocks $B_0$ to $B_k$, except $x_i$, $0 \le i \le k+1$, retains their successor and predecessor in the Hamiltonian cycle of $B'$ same as it was in its respective block's Hamiltonian cycle in $G$.
 \item Each $x_i$, $0 \le i \le k$, retains its Hamiltonian cycle predecessor in $B'$ same as it was in the block $B_i$ of $G$ and each $x_i$, $1 \le i \le k+1$, retains its Hamiltonian cycle successor in $B'$ same as in the block $B_{i-1}$ of $G$. 
\end{enumerate}
\end{lemma}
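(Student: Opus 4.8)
The plan is to analyze the effect of adding the edge $(u,v)$ by tracking how the path $P = (u = x_0, x_1, \ldots, x_k, x_{k+1} = v)$ interacts with the block structure of $G$. I would first recall, from the proof of Lemma \ref{extensionlemma}, that each internal vertex $x_i$, $1 \le i \le k$, is a cut vertex of $G$ (since the two consecutive edges $e_i = (x_{i-1}, x_i)$ and $e_{i+1} = (x_i, x_{i+1})$ lie in distinct blocks $B_{i-1}$ and $B_i$), and that consequently the blocks $B_0, B_1, \ldots, B_k$ form a path in the block-cut tree of $G$, consecutive ones sharing exactly the cut vertex $x_i$. Adding $(u,v)$ creates a cycle $u = x_0, x_1, \ldots, x_k, v, u$ together with all of $B_0, \ldots, B_k$; I would argue this whole union becomes $2$-connected (no cut vertex survives, since each former cut vertex $x_i$ now lies on the new cycle bridging the two sides), hence forms a single new block $B'$, while every block of $G$ not among $B_0,\ldots,B_k$ is untouched — giving part 1.

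For parts 2 and 3, the key tool is Lemma \ref{extensionlemma}'s conclusion that $G'$ is outerplanar, hence $B'$ is a $2$-connected outerplanar graph and so has a \emph{unique} Hamiltonian cycle (by \cite{Syslo79}); the task is to identify that cycle explicitly. I would exhibit a concrete Hamiltonian cycle of $B'$ and then invoke uniqueness. The candidate is built by splicing together the Hamiltonian cycles of the $B_i$: for each non-trivial $B_i$, condition (ii) says $x_{i+1}$ is the successor of $x_i$ in $B_i$'s Hamiltonian cycle, so that cycle, read starting at $x_i$, goes $x_i, x_{i+1}, \ldots$ around and back to $x_i$ through $x_i$'s predecessor in $B_i$; I traverse $B_i$ from $x_{i+1}$ back around to $x_i$ (the long way, avoiding the edge $(x_i,x_{i+1})$), then jump via the shared cut vertex $x_{i+1}$ into $B_{i+1}$, and so on, finally closing up with the new edge $(v,u)$. (For trivial $B_i$ the "detour" is just the edge $(x_i, x_{i+1})$ itself, which is the edge being bypassed — one must be slightly careful here, but since $B_i$ is a bridge it is simply traversed once.) This produces a cycle through every vertex of $B'$; I would check it is Hamiltonian (every vertex of every $B_i$ appears exactly once — the $x_i$ are the only shared vertices and each is visited once) and that it is a cycle in $G'$.

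Having this explicit Hamiltonian cycle $\mathcal{H}'$ of $B'$, parts 2 and 3 are read off directly. For a vertex $w$ of $B_i$ other than the $x_j$'s, the cycle $\mathcal{H}'$ passes through $w$ exactly where $B_i$'s original Hamiltonian cycle does, using the same two incident edges, so its successor and predecessor are unchanged — part 2. For $x_i$ with $0 \le i \le k$: in $\mathcal{H}'$ the edge leaving $x_i$ toward $x_{i+1}$ is replaced, but the edge entering $x_i$ (its predecessor in $B_i$, reached by the long detour around $B_i$) is exactly $x_i$'s predecessor in $B_i$; and for $x_i$ with $1 \le i \le k+1$, the edge into $x_i$ coming from $B_{i-1}$ is unused, but $x_i$'s successor in $\mathcal{H}'$ is the continuation into $B_{i-1}$'s detour, which is precisely $x_i$'s successor in $B_{i-1}$ — that is part 3. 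The endpoints $x_0 = u$ and $x_{k+1} = v$ need the mild observation that they need not be cut vertices, but the successor/predecessor bookkeeping still applies with the new edge $(u,v)$ taking one slot.

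The main obstacle I anticipate is not any single hard step but getting the case analysis clean: handling trivial versus non-trivial $B_i$ uniformly, correctly orienting each $B_i$'s Hamiltonian cycle relative to the fixed clockwise order so that "successor" and "predecessor" claims come out with the right labels, and treating the degenerate situations (e.g. $k = 1$, or an endpoint $u$ or $v$ coinciding with structure already forced by earlier lemmas). The conceptual crux — that $B'$ has a unique Hamiltonian cycle and that the spliced cycle I wrote down is it — is immediate once outerplanarity of $G'$ is granted by Lemma \ref{extensionlemma}; everything else is careful traversal bookkeeping, which I would organize by fixing the orientation of $\mathcal{H}'$ to agree with each $B_i$ on the arc from $x_{i+1}$ around to $x_i$.
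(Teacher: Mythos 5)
Your proposal is correct and follows essentially the same route as the paper: part 1 via the new cycle through $x_0,\ldots,x_{k+1}$ merging exactly the blocks $B_0,\ldots,B_k$, and parts 2 and 3 by exhibiting the spliced cycle, which is precisely the paper's concatenation $u=x_0\circ P_k\circ P_{k-1}\circ\cdots\circ P_0\circ u$ of the ``long way around'' arcs of each $B_i$. Your explicit appeal to the uniqueness of the Hamiltonian cycle of a $2$-connected outerplanar graph only makes formal what the paper uses implicitly (modulo a harmless indexing slip in your prose about jumping into $B_{i+1}$ rather than $B_{i-1}$ after traversing $B_i$).
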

\begin{proof}
When the edge $(u, v)$ is added, it creates a cycle containing the vertices $u=x_0, x_1,\ldots, x_{k+1}=v$. Hence, the blocks $B_0$ to $B_k$ of $G$ merge together to form a single block $B'$ in $G'$. It is obvious that other blocks are unaffected by this edge addition. 

For simplicity, if $B_i$ is a trivial block containing the edge $(x_i, x_{i+1})$, we say that $x_i$ and $x_{i+1}$ are neighbors of each other in the Hamiltonian cycle of $B_i$. For each $B_i$, $0 \le i \le k$, let $x_i, x_{i+1}, z_{i1}, z_{i2}, \ldots, z_{it_i}, x_i$ be the Hamiltonian cycle of $B_i$ in $G$. For $0 \le i \le k$, let us denote the path $x_{i+1}, z_{i1}, z_{i2}, \ldots, z_{it_i}$ by $P_i$. Then, the Hamiltonian cycle of $B'$ is $u=x_0 \circ P_k \circ P_{k-1}\circ \ldots P_0 \circ u$, where $\circ$ denotes the concatenation of the paths. (For example, in Fig. \ref{fig1}, $u, v, a, b, c, d, e, f, g, h, i, j, k, l, u$ is the Hamiltonian cycle of the new block formed, when the edge $(u, v)$ is added.) From this, we can conclude that the second and third parts of the lemma holds.
\qed
\end{proof}
\section{Stage 2: Construction of $G'$ and its path decomposition}
The organization of this sections is as follows: For each cut vertex $x$ of $G$, we define an ordering among the child blocks attached through $x$ to their parent block, based on the nice path decomposition $(\mathcal{P}, \mathscr{X})$ of $G$ obtained using Lemma \ref{nicepath}. This ordering is then used in defining a supergraph $G'(V, E')$ of $G$ such that for every cut vertex $x$ in $G'$, $G' \setminus x$ has exactly two components. Using repeated applications of Lemma \ref{extensionlemma}, we then show that $G'$ is outerplanar. We extend the path decomposition $(\mathcal{P}, \mathscr{X})$ of $G$ to a path decomposition $(\mathcal{P}', \mathscr{X}')$ of $G'$, as described in Section \ref{background}. By a counting argument using the properties of the nice path decomposition $(\mathcal{P}, \mathscr{X})$, we show that the width of the path decomposition $(\mathcal{P}', \mathscr{X}')$ is at most $2p'+1$, where $p'$ is the width of $(\mathcal{P}, \mathscr{X})$.
\subsection{Defining an ordering of child blocks}\label{ordering}
 If $(\mathcal{P}, \mathscr{X})$ is a nice path decomposition of $G$, then, for each non-root block $B$ of $G$, at least one bag in $\mathscr{X}$ contains both the first and last vertices of $B$ together. 
\begin{definition}[Sequence number of a non-root block]
Let $(\mathcal{P}, \mathscr{X})$ be the nice path decomposition of $G$ obtained using Lemma \ref{nicepath}. For each non-root block $B$ of $G$, we define the sequence number of $B$ as $\min\{i \mid X_i \in \mathscr{X}$ simultaneously contains both the first and last vertices of $B\}$.
\end{definition}
For each cut vertex $x$, there is a unique block $B^x$ such that $B^x$ and its child blocks are intersecting at $x$. For each cut vertex $x$, we define an ordering among the child blocks attached at $x$, as follows. If $B_1,\ldots,B_k$ are the child blocks attached at $x$, we order them in the increasing order of their sequence numbers in $(\mathcal{P}, \mathscr{X})$. If $B_i$ and $B_j$ are two child blocks with the same sequence number, their relative ordering is arbitrary. 

From the ordering defined, we can make some observations about the appearance of the first and last vertices of a block $B_i$ in the path decomposition. These observations are crucially used for bounding the width of the path decomposition $(\mathcal{P}', \mathscr{X}')$ of $G'$. Let $B_1,\ldots, B_k$ be the child blocks attached at a cut vertex $x$, occurring in that order according to the ordering we defined above. For $1 \le i \le k$, let $y_i$ and $y'_i$ respectively be the last and first vertices of $B_i$.
\begin{property}\label{prop1}
 For any $1 \le i \le k-1$, if $Gap_\mathscr{X}(y'_i, y_{i+1})\ne \emptyset$, then\\$Gap_\mathscr{X}(y'_i, y_{i+1})=[LastIndex_\mathscr{X}(y'_i)+1, FirstIndex_\mathscr{X}(y_{i+1})]$ and $x \in X_t$ for all $t \in Gap_\mathscr{X}(y'_i, y_{i+1})$.
\end{property}
\begin{proof}
 If  $Gap_\mathscr{X}(y'_i, y_{i+1})\ne \emptyset$, either $LastIndex_\mathscr{X}(y'_i)<FirstIndex_\mathscr{X}(y_{i+1})$ or $LastIndex_\mathscr{X}(y_{i+1})<FirstIndex_\mathscr{X}(y'_{i})$. The latter case will imply that, sequence number of $B_{i+1}<$ sequence number of $B_i$, which is a contradiction. Therefore, $LastIndex_\mathscr{X}(y'_i)<FirstIndex_\mathscr{X}(y_{i+1})$ and hence $Gap_\mathscr{X}(y'_i, y_{i+1})=[LastIndex_\mathscr{X}(y'_i)+1, FirstIndex_\mathscr{X}(y_{i+1})]$.
 
 Since $x$ is adjacent to $y'_i$ and $y_{i+1}$, we get $FirstIndex_\mathscr{X}(x)\le LastIndex_\mathscr{X}(y'_i)$ and $LastIndex_\mathscr{X}(x) \ge$\\$FirstIndex_\mathscr{X}(y_{i+1})$. We can conclude that $Range_\mathscr{X}(x) \supseteq [LastIndex_\mathscr{X}(y'_i), FirstIndex_\mathscr{X}(y_{i+1})]$ and the property follows. 
\qed
\end{proof}
\begin{property}\label{prop2}
 For any $1 \le i <j \le k-1$, $Gap_\mathscr{X}(y'_i, y_{i+1}) \cap Gap_\mathscr{X}(y'_j, y_{j+1})=\emptyset$. 
\end{property}
\begin{proof}
We can assume that $Gap_\mathscr{X}(y'_i, y_{i+1})\ne \emptyset$ and $Gap_\mathscr{X}(y'_j, y_{j+1}) \ne \emptyset$, since the property holds trivially otherwise. By Property \ref{prop1}, we get, $Gap_\mathscr{X}(y'_i, y_{i+1})=[LastIndex_\mathscr{X}(y'_i)+1, FirstIndex_\mathscr{X}(y_{i+1})]$ and $Gap_\mathscr{X}(y'_j, y_{j+1})=$\\$[LastIndex_\mathscr{X}(y'_j)+1, FirstIndex_\mathscr{X}(y_{j+1})]$. Since $i+1 \le j$, by the property of the ordering of blocks, we know that sequence number of $B_{i+1} \le$ sequence number of $B_j$. From the definitions, we have,  $FirstIndex_\mathscr{X}(y_{i+1})\le$ sequence number of $B_{i+1} \le $sequence number of $B_j \le LastIndex_\mathscr{X}(y'_j)$ and the property follows.
\qed
\end{proof}
\subsection{Algorithm for constructing $G'$ and its path decomposition}
We use Algorithm \ref{algMulticut} to construct $G'(V, E')$ and a path decomposition $(\mathcal{P}', \mathscr{X}')$ of $G'$. The processing of each cut vertex is done in lines \ref{loop1} to \ref{modify} of Algorithm \ref{algMulticut}. While processing a cut vertex $x$, the algorithm adds the edges $(y_1', y_2), (y_2', y_3), \ldots,$  $(y'_{k_x-1}, y_{k_x})$ (as defined in the algorithm) and modifies the path decomposition, to reflect each edge addition. 
\begin{algorithm}
     \LinesNumbered 
     \DontPrintSemicolon
    \caption{Computing the intermediate supergraph $G'$ and its path decomposition}
     \label{algMulticut}    
     \KwIn{A connected outerplanar graph $G(V$, $E)$ and a nice path decomposition $(\mathcal{P}, \mathscr{X})$ of $G$, 
           the rooted block tree of $G$, the Hamiltonian cycle of each non-trivial block of $G$ and the first and last vertices of each non-root block of $G$}
     \KwOut{An outerplanar supergraph $G'(V$, $E')$ of $G$ such that, for every cut vertex $x$ of $G'$, $G'\setminus x$ has exactly two connected components, a path decomposition $(\mathcal{P}', \mathscr{X}')$ of $G'$}    
    $E'=E$, $(\mathcal{P}', \mathscr{X}')= (\mathcal{P}, \mathscr{X})$\;
     \For {each cut vertex $x \in V(G)$\label{loop1}}{
          Let $B_1,\ldots,B_{k_x}$, in that order, be the child blocks attached at $x$, according to the ordering defined in Section \ref{ordering}\;
          \For {$i=1$ to $k_x-1$}{
               Let $y'_i$ be the first vertex of $B_i$ and $y_{i+1}$ be the last vertex of $B_{i+1}$\;
               $E' = E' \cup \{(y'_i, y_{i+1})\}$\;
               \lIf{$Gap_\mathscr{X}(y'_i, y_{i+1})\ne \emptyset$}{\label{loop2}    
               \lFor{$t\in Gap_\mathscr{X}(y'_i, y_{i+1})$}{
                    $X'_t = X'_t \cup \{y'_i\}$\label{modify}
               }
               }
          }
}
\end{algorithm}
\begin{lemma}\label{outerplanarfirst}
   $G'$ is outerplanar and for each cut vertex $x$ of $G'$, $G'\setminus x$ has exactly two components.
\end{lemma}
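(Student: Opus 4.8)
The plan is to prove the two assertions of Lemma~\ref{outerplanarfirst} separately, both by induction on the edges added in Algorithm~\ref{algMulticut}. Throughout, I will track the graph $G_\ell$ obtained after the $\ell$-th edge addition, so that $G_0 = G$ and the final graph is $G'$.

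\textbf{Outerplanarity.} I would show by induction on $\ell$ that each $G_\ell$ is outerplanar, the invariant being maintained by a single application of Lemma~\ref{extensionlemma} at each step. Fix a cut vertex $x$ and consider the addition of the edge $(y'_i, y_{i+1})$, where $y'_i$ is the first vertex of child block $B_i$ and $y_{i+1}$ is the last vertex of $B_{i+1}$, both attached at $x$. The candidate path $P$ required by Lemma~\ref{extensionlemma} is the length-two path $(y'_i, x, y_{i+1})$ — or, if $B_i$ (resp.\ $B_{i+1}$) is trivial, this degenerates appropriately, but since $y'_i \ne x \ne y_{i+1}$ in the non-trivial case and the trivial case gives a genuine edge to $x$, $P$ is a simple path of length $k=1$ with interior vertex $x_1 = x$. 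I must check the two hypotheses of Lemma~\ref{extensionlemma}: (i) $P$ shares at most one edge with any block — the edge $(y'_i,x)$ lies in $B_i$ and the edge $(x,y_{i+1})$ lies in $B_{i+1}$, and since $B_i \ne B_{i+1}$ (distinct child blocks at $x$), each block meets $P$ in at most one edge (here I must be a little careful: after earlier additions at the same cut vertex, $B_i$ and $B_{i+1}$ may already have been merged into a larger block $B'$ with the previously added edges; I need the invariant, proved via Lemma~\ref{Hamiltonianunchanged}, that $x$ is still on the Hamiltonian cycle of $B'$ with $y'_i$ its predecessor-side neighbour and $y_{i+1}$ reachable as the appropriate successor — this is exactly the content of parts 2 and 3 of Lemma~\ref{Hamiltonianunchanged}); and (ii) if the block containing $(y'_i, x)$ is non-trivial, then $x$ must be the successor of $y'_i$ in its Hamiltonian cycle, and if the block containing $(x, y_{i+1})$ is non-trivial then $y_{i+1}$ must be the successor of $x$. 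By definition, $y'_i$ is the first vertex of $B_i$, i.e.\ the \emph{successor} of $x$ in the Hamiltonian cycle of $B_i$; reading the cycle in the fixed clockwise direction, $x$ is the successor of $y'_i$ only if we traverse $B_i$'s cycle in the direction $y'_i \to \dots \to x$, and indeed $y_i$ (last vertex of $B_i$) is the predecessor of $x$, so the orientation is consistent — I would phrase $P$ as going "out of $B_i$ via its clockwise order toward $x$, then out of $B_{i+1}$". The key point is that $y_{i+1}$ being the \emph{last} vertex of $B_{i+1}$ means $y_{i+1}$ is the predecessor of $x$, i.e.\ $x$ is the successor of $y_{i+1}$, so traversing $B_{i+1}$ from $x$ we reach $y_{i+1}$ last — one must orient $P$ so that inside $B_{i+1}$ it is $x$ followed by its successor heading around to $y_{i+1}$; I will need to double-check that the \emph{successor} condition of hypothesis (ii) is met with this orientation, possibly by taking $P = (y_{i+1}, x, y'_i)$ instead and relabelling, since Lemma~\ref{extensionlemma} is symmetric in $u,v$. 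Granting the orientation works out, Lemma~\ref{extensionlemma} gives outerplanarity of $G_{\ell+1}$, and after merging via Lemma~\ref{Hamiltonianunchanged} the invariant on Hamiltonian cycles is restored for the next step.

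\textbf{Exactly two components after deleting any cut vertex.} For this I would argue about the final graph $G'$ directly. First observe which vertices remain cut vertices of $G'$: a vertex $v$ is a cut vertex of $G'$ iff it was a cut vertex of $G$ and the edge-additions at $v$ did not fully "chain up" its child blocks — but in fact the algorithm adds edges $(y'_1,y_2),(y'_2,y_3),\dots,(y'_{k_x-1},y_{k_x})$ at every cut vertex $x$, which by Lemma~\ref{Hamiltonianunchanged} merges $B^x$'s child blocks $B_1,\dots,B_{k_x}$ together with $B^x$ pairwise into one block; so after processing $x$, the only way $x$ can still be a cut vertex is through the parent side. Concretely, after the whole algorithm, for any cut vertex $x$ of $G'$: let $B^x$ be the block (in $G'$) such that $B^x$ and its child blocks meet at $x$. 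Deleting $x$ from $G'$, the child-block side of $x$ forms exactly one component (they have all been merged into a single block, minus $x$, which is connected since a block minus one vertex is connected), and the parent side forms exactly one component (the rest of $G'$ minus $x$ is connected: everything hangs together through the root block and the block tree, and no other separation is introduced). Hence exactly two components. The cleanest way to carry this out rigorously is: (a) show $G' \setminus x$ has at least two components because $x$ is a cut vertex; (b) show it has at most two by exhibiting that $V(G')\setminus\{x\}$ is covered by two connected sets, namely $V(B')\setminus\{x\}$ (where $B'$ is the merged block at $x$ on the child side) and $V(G')\setminus(V(B')\setminus\{x\})$, each connected — the first because a $2$-connected block minus a vertex is connected (here I use that the merged $B'$ is $2$-connected, which follows from it being a genuine block of $G'$ that is not a single edge, since $k_x \ge 1$ and it contains a cycle through $x$), and the second by a short block-tree connectivity argument.

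\textbf{Main obstacle.} The routine-looking but genuinely delicate step is verifying hypothesis (ii) of Lemma~\ref{extensionlemma} at each edge addition, \emph{after} earlier additions at the same cut vertex have already merged some blocks: I must be sure that in the current (partially modified) graph, the edge $(y'_i, x)$ still sits in a block within which $x$ is the Hamiltonian-cycle successor of $y'_i$ (or set up the path in the reverse orientation), and likewise for $(x, y_{i+1})$, and that $P$ still shares at most one edge with each current block. This is precisely what parts~2 and~3 of Lemma~\ref{Hamiltonianunchanged} are designed to give — each $x_i$ retains its predecessor from $B_i$ and its successor from $B_{i-1}$ — so the argument goes through, but stating it carefully (keeping the clockwise orientation consistent across merges, and handling trivial blocks where "first = last = the neighbour of $x$") is where the real work lies. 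Everything else is bookkeeping.
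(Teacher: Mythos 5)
Your proposal follows essentially the same route as the paper: an induction over the edge additions at each cut vertex, applying Lemma \ref{extensionlemma} to the two-edge path through $x$ — with exactly the orientation fix you anticipate, since the paper takes $P_i=(y_{i+1},x,y'_i)$ so that $x$ is the successor of $y_{i+1}$ in $B_{i+1}$ and $y'_i$ is the successor of $x$ in the merged block — and using Lemma \ref{Hamiltonianunchanged} to maintain the invariant that after each addition the first vertex of the newly absorbed block becomes the successor of $x$ in the merged child block. The two-component claim is then immediate, as in the paper's one-line conclusion, because after processing, each cut vertex lies in exactly two blocks: its parent block (which, contrary to one phrase in your sketch, never merges — only the child blocks merge with one another) and the single merged child block.
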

\begin{proof}
 We know that $G$ is outerplanar to begin with. At a certain stage, let $x$ be the cut vertex taken up by the algorithm for processing (in line \ref{loop1}). Assume that the graph at this stage, denoted by $G_0$, is outerplanar and each cut vertex $x'$ whose processing is completed, satisfies the condition that all the child blocks attached at $x'$ have merged together to form a single child block attached at $x'$. 

It is clear that the child blocks attached at a vertex $x$ remain unchanged until $x$ is picked up by the algorithm for processing. Let $B_1,\ldots,B_{k_x}$, in that order, be the child blocks attached at $x$, according to the ordering defined in Section \ref{ordering}. Let $B^x$ be the parent block of $B_1,\ldots,B_{k_x}$, in the current graph $G_0$. For each $1 \le i \le k_x$, let $y'_i$ and $y_i$ respectively be the first and last vertices of $B_i$. For $1 \le i \le k_x-1$, let $G_{i}$ be the graph obtained, when the algorithm has added the edges up to $(y'_i, y_{i+1})$.

We will prove that the algorithm maintains the following invariants, while processing the cut vertex $x$, for each $0 \le i \le k_x-1$:
\begin{quote}
\textit{The graph $G_{i}$ is outerplanar. In $G_{i}$, the blocks $B_1,\ldots,B_{i+1}$ of $G_{i-1}$ have merged together and formed a child block $B'$ of $B^x$. The vertex $y'_{i+1}$ is the first vertex of $B'$. If $i\le k_x-2$, blocks $B_{i+2},\ldots,B_{k_x}$ remain the same in $G_i$, as in $G$.}
\end{quote}
By our assumption, the invariants hold for $G_0$. We need to show that if the invariants hold for $G_{i-1}$, they hold for $G_{i}$ as well. Assume that the invariants hold for $G_{i-1}$ and let $B'$ be the child block of $B^x$ in $G_{i-1}$ that is formed by merging together the blocks $B_1,\ldots,B_{i}$ of $G_{i-2}$, as stated in the invariant. Since the invariants hold for $G_{i-1}$ by our assumption, $y'_{i}$ is the first vertex of $B'$ and $y_{i+1}$ is the last vertex in $B_{i+1}$. In other words, $y'_{i}$ is the successor of $x$ in $B'$ and $y_{i+1}$ is the predecessor of $x$ in $B_{i+1}$ and the edges $(y_{i+1}, x)$ and $(x, y'_i)$ of the path $P_i =(y_{i+1}, x, y_i')$ belong to two different blocks of $G_{i-1}$. Hence, by Lemma \ref{extensionlemma}, after adding the edge $(y_i', y_{i+1})$, the resultant intermediate graph $G_i$ is outerplanar. By Lemma \ref{Hamiltonianunchanged}, the blocks $B'$ and $B_{i+1}$ merges together to form a child block $B''$ of $B^x$ in $G_i$. Further, the vertex $y'_{i+1}$ 
will be the successor of $x$ in the Hamiltonian cycle of $B''$ i.e, the first of the block $B''$. Remaining blocks of $G_i$ are the same as in $G_{i-1}$. Thus, all the invariants hold for $G_i$. It follows that the graph $G_{k_x-1}$ is outerplanar and the blocks $B_1,\ldots,B_{k_x}$ have merged 
together in $G_{k_x-1}$ to form a single child block of $B^x$ at $x$. 

When this processing is repeated at all cut vertices, it is clear that $G'$ is outerplanar and for each cut vertex $x$ of $G'$, $G'\setminus x$ has exactly two components.
\qed
\end{proof}
\begin{lemma}
 $(\mathcal{P}', \mathscr{X}')$ is a path decomposition of $G'$ of width at most $8p+7$. 
\end{lemma}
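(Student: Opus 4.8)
The plan is to separate the statement into two claims: that $(\mathcal{P}', \mathscr{X}')$ is a path decomposition of $G'$ at all, and that its width is at most $8p+7$. For the first claim I would appeal directly to the edge-addition update rule described in Section~\ref{background}. The edges added by Algorithm~\ref{algMulticut} are exactly the pairs $(y'_i, y_{i+1})$, and (as already used in the proof of Lemma~\ref{outerplanarfirst}) these are non-adjacent vertices of the current graph, since at the moment $(y'_i,y_{i+1})$ is added the merged block containing $y'_i$ and the block $B_{i+1}$ containing $y_{i+1}$ share only the vertex $x$, and neither $y'_i$ nor $y_{i+1}$ equals $x$. So the update rule applies, and the only thing to verify is that the algorithm's unconditional insertion of $y'_i$ (rather than $y_{i+1}$) into the gap bags is the correct choice: this is precisely the observation inside the proof of Property~\ref{prop1}, namely that the alternative case $LastIndex_\mathscr{X}(y_{i+1}) < FirstIndex_\mathscr{X}(y'_i)$ would force the sequence number of $B_{i+1}$ to be smaller than that of $B_i$, contradicting the ordering. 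Given this, after the insertion $y'_i$ lies together with $y_{i+1}$ in bag $FirstIndex_\mathscr{X}(y_{i+1})$, so the new edge is covered, and since every inserted vertex is always inserted in its role as a first vertex $y'_i$ into a block of bags whose left endpoint is $LastIndex_\mathscr{X}(y'_i)+1$ (i.e. immediately to the right of its current range), the set of bags containing any fixed vertex stays an interval. Hence all three path-decomposition axioms survive.

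For the width bound, the idea is a charging argument: fix a bag index $t$ and bound $|X'_t \setminus X_t|$ by the number of cut vertices of $G$ lying in the \emph{original} bag $X_t$. Two facts, both already available in the excerpt, do all the work. First, by Property~\ref{prop2}, for a fixed cut vertex $x$ the gaps $Gap_\mathscr{X}(y'_i, y_{i+1})$ over consecutive child blocks of $x$ are pairwise disjoint, so $t$ lies in at most one of them; hence processing $x$ inserts at most one new vertex into $X'_t$. Second, by Property~\ref{prop1}, whenever processing $x$ inserts something into $X'_t$ we must have $x \in X_t$. Combining these, $|X'_t \setminus X_t| \le |\{\text{cut vertices } x : x \in X_t\}| \le |X_t|$, so $|X'_t| \le 2|X_t|$. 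Since $(\mathcal{P},\mathscr{X})$ has width at most $4p+3$ by Lemma~\ref{nicepath}, every $|X_t| \le 4p+4$, giving $|X'_t| \le 8p+8$, i.e. width at most $8p+7$.

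The main obstacle I anticipate is making the path-decomposition-validity part fully rigorous given that Algorithm~\ref{algMulticut} computes every gap with respect to the fixed original $\mathscr{X}$ rather than the running decomposition: one has to check that this "overshooting" never breaks the interval property of a vertex's bag set and never misses a required co-occurrence. The point that rescues this is exactly the rightward-only, monotone growth of each vertex's bag-interval noted above, which is itself essentially forced by Property~\ref{prop1}; once that is in hand, the remainder of both parts is bookkeeping built directly on Properties~\ref{prop1} and~\ref{prop2} together with the width bound of Lemma~\ref{nicepath}.
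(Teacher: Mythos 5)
Your proposal is correct and follows essentially the same route as the paper: validity of $(\mathcal{P}', \mathscr{X}')$ via the Section~\ref{background} update rule combined with Property~\ref{prop1} (which forces the case where $y'_i$ is the vertex to insert), and the width bound via the same charging argument using Properties~\ref{prop1} and~\ref{prop2} to get $|X'_t| \le 2|X_t| \le 2(4p+4)$. Your extra remark about the gaps being computed with respect to the original $\mathscr{X}$ and the rightward-only growth of each vertex's bag interval is a sound (and slightly more explicit) justification of a point the paper passes over quickly, but it does not change the argument.
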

\begin{proof}
 Algorithm \ref{algMulticut} initialized $(\mathcal{P}', \mathscr{X}')$ to $(\mathcal{P}, \mathscr{X})$ and modified it in line \ref{modify}, following each edge addition. By Property \ref{prop1}, we have $Gap_\mathscr{X}(y'_i, y_{i+1})=[LastIndex_\mathscr{X}(y'_i)+1, FirstIndex_\mathscr{X}(y_{i+1})]$. Hence, by the modification done in line \ref{modify} while adding a new edge $(y'_i, y_{i+1})$, $(\mathcal{P}', \mathscr{X}')$ becomes a path decomposition of the graph containing the edge $(y'_i, y_{i+1})$, as explained in Section \ref{background}. It follows that, when the algorithm terminates $(\mathcal{P}', \mathscr{X}')$ is a path decomposition of $G'$. 

Consider any $X'_t \in \mathscr{X}'$. While processing the cut vertex $x$, if Algorithm \ref{algMulticut} inserts a new vertex $y'_i$ to $X'_t$, to reflect the addition of a new edge $(y'_i, y_{i+1})$ then, $t \in Gap_\mathscr{X}(y'_i, y_{i+1})$. Suppose $(y'_i, y_{i+1})$ and $(y'_j, y_{j+1})$ are two new edges added while processing the cut vertex $x$, where, $1 \le i <j \le k_x-1$. By property \ref{prop2}, we know that if $t \in Gap_\mathscr{X}(y'_i, y_{i+1})$, then, $t \notin Gap_\mathscr{X}(y'_j, y_{j+1})$. Therefore, when the algorithm processes a cut vertex $x$ in lines \ref{loop1} to \ref{modify}, at most one vertex is newly inserted to the bag $X'_t$. Moreover, if $t \in Gap_\mathscr{X}(y'_i, y_{i+1})$ then, the cut vertex $x \in X_t$, by Property \ref{prop1}. 

That means, a vertex not present in the bag $X_t$ can be added to $X'_t$ only when a cut vertex $x$ that is already present in the bag $X_t$ is being processed. Moreover, when a cut vertex $x$ that is present in $X_t$ is processed, at most one new vertex can be added to $X'_t$. It follows that $|X'_t| \le |X_t|+$number of cut vertices present in $X_t \le 2 |X_t| \le 2(4p+4)$. Therefore, the width of the path decomposition $(\mathcal{P}', \mathscr{X}')$ is at most $8p+7$.
\qed
\end{proof}
\section{Construction of $G''$ and its path decomposition}
In this section, we give an algorithm to add some more edges to $G'(V, E')$ so that the resultant graph $G''(V, E'')$ is $2$-vertex-connected. The algorithm also extend the path decomposition $(\mathcal{P}', \mathscr{X}')$ of $G'$ to a path decomposition $(\mathcal{P}'', \mathscr{X}'')$ of $G''$. The analysis of the algorithm shows the existence of a surjective mapping from the cut vertices of $G'$ to the edges in $E'' \setminus E'$. A counting argument based on this surjective mapping shows that the width of the path decomposition $(\mathcal{P}'', \mathscr{X}'')$ is at most $16p+15$. For making our presentation simpler, if a block $B_i$ is just an edge $(u,v)$, we abuse the definition of a Hamiltonian cycle and say that $u$ and $v$ are clockwise neighbors of each other in the Hamiltonian cycle of $B_i$.

Recall that for every cut vertex $x$ of $G'$, $G'\setminus x$ has exactly two components. Since any cut vertex belongs to exactly two blocks of $G$, based on the rooted block tree structure of $G$, we call them the parent block containing $x$ and the child block containing $x$. We use $child_{x}(B)$ to denote the unique child block of the block $B$ at the cut vertex $x$ and $parent(B)$ to denote the parent block of the block $B$. For a block $B$, $next_B(v)$ denotes the successor of the vertex $v$ in the Hamiltonian cycle of $B$. We say that a vertex $u$ is \textit{encountered by the algorithm}, when $u$ gets assigned to the variable $v'$, during the execution of the algorithm. The block referred to by the variable $B$ represents the \textit{current block} being traversed.

To get a high level picture of our algorithm, the reader may consider it as a traversal of vertices of $G'$, starting from a non-cut vertex in the root block of $G'$ and proceeding to the successor of $v$ on reaching a non-cut vertex $v$. On reaching a cut vertex $x$, the algorithm bypasses $x$ and recursively traverses the child block containing $x$ and its descendant blocks, starting from the successor of $x$ in child block containing $x$. After this, the algorithm comes back to $x$ to visit it, and continues the traversal of the remaining graph, by moving to the successor of $x$ in the parent block containing $x$. Before starting the recursive traversal of the child block containing $x$ and its descendant blocks, the algorithm sets $bypass(x)=TRUE$. (Note that, since there is only one child block attached to any cut vertex, each cut vertex is bypassed only once.) In this way, when a sequence of one or more cut vertices is bypassed, an edge is added from the vertex visited just before bypassing the first 
cut vertex in the sequence to the vertex visited just after bypassing the last cut vertex in the sequence. The path decomposition is also modified, to reflect this edge addition. The detailed algorithm to $2$-vertex-connect $G'$ is given in Algorithm \ref{algBiConnect}.

\begin{algorithm}[h]
     \LinesNumbered 
     \DontPrintSemicolon
    \caption{Computing a $2$-vertex-connected outerplanar supergraph}
     \label{algBiConnect}    
     \KwIn{A connected outerplanar graph $G'(V$, $E')$ such that $G'\setminus x$ has exactly two connected components for every cut vertex $x$ of $G'$. A path decomposition $(\mathcal{P}', \mathscr{X}')$ of $G'$. The rooted block tree of $G'$, the Hamiltonian cycle of each non-trivial block of $G'$ and the first and last vertices of each non-root block of $G'$}
     \KwOut{A $2$-vertex-connected outerplanar supergraph $G''(V$, $E'')$ of $G'$, a path decomposition $(\mathcal{P}'', \mathscr{X}'')$ of $G''$}    
     $E'' =E'$, $(\mathcal{P}'', \mathscr{X}'')$ = $(\mathcal{P}', \mathscr{X}')$\;
     \For{each vertex $v \in V(G')$ \label{forLoop}}{
         completed($v$) = FALSE, 
         \lIf {$v$ is a cut vertex}{
         bypass($v$) = FALSE 
         }
     }
     $B$ = rootBlock \label{rootBLine} \;
     Choose $v'$ to be some non-cut vertex of the rootBlock \label{initialVertexLine} \;
     $completed(v')=$TRUE, $completedCount = 1$ \label{completeLine1} \;
     $v=v'$ \label{reassignline1} \;
     \While {$completedCount < |V(G')|$ \label{outerWhile} }{
         $v' = next_B(v)$ \label{reassignv'} \;
         $bypassLoopTaken=$FALSE,  $sequence=$ EmptyString \label{initialSeq} \;
	\While{ $v'$ is a cut vertex and $bypass(v')$ is FALSE \label{innerWhile} }{
            $bypassLoopTaken=$TRUE\;  
            $bypass(v')=$TRUE,  $sequence=$Concatenate$(sequence, v')$ \label{bypassline} \;
            $B = child_{v'}(B)$, $v'=next_B(v')$ \label{childBlockLine} \; 
         }
         \If {$bypassLoopTaken$ is TRUE \label{checkLoopTaken}}{
             $e= (v, v')$,  $bypassSeq(e)=sequence$ \label{seqLine} \;
             $E'' = E'' \cup \{e\}$\label{edgeline}\;
             \If{$Gap_\mathscr{X'}(v, v')\ne \emptyset$}{
                       \lIf{$LastIndex_\mathscr{X'}(v)<FirstIndex_\mathscr{X'}(v')$}{
                           \lFor{$t\in Gap_\mathscr{X'}(v, v')$}{  
                            $X''_t = X''_t \cup \{v\}$ 
                         } 
                        }\;   
                        \lElseIf{$LastIndex_\mathscr{X'}(v')<FirstIndex_\mathscr{X'}(v)$}{
                            \lFor{$t\in Gap_\mathscr{X'}(v, v')$}{
                               $X''_t = X''_t \cup \{v'\}$ 
                             }
                        }
                     }

        }
        \If{$v'$ is a cut vertex and $bypass(v')$ is TRUE \label{checkLine}}{
              $B = parent(B)$ \label{parentBlockLine}
         }
        $completed(v')$= TRUE, $completedCount = completedCount +1$ \label{completeLine2}\; 
        $v=v'$\label{reassignline}\;   
     }        
\end{algorithm}
If $G'$ has only a single vertex, then it is easy to see that the algorithm does not modify the graph. For the rest of this section, we assume that this is not the case. The following recursive definition is made in order to make the description of the algorithm easier. 
\begin{definition}
Let $G$ be a connected outerplanar graph with at least two vertices such that $G\setminus x$ has exactly two connected components for every cut vertex $x$ and $v$ be a non-cut vertex in the root block of $G$. For any cut vertex $x$ of $G$, let $G_{x}$ denote the subgraph of $G$ induced on the vertices belonging to the unique child block attached at $x$ and all its descendant blocks. If the root-bock of $G$ is a non-trivial block, let $v, v_1, \ldots, v_t, v$ be the Hamiltonian cycle of the root-block of $G$, starting at $v$. Then,
\begin{equation*}
Order(G, v) =
\begin{cases}  
v, v_1 &\text{if $G$ is a single edge $(v, v_1)$}\\
v, v_1, \ldots, v_t &\text{if $G$ is $2$-vertex-connected}\\
v, S_1, \ldots, S_t &\text{otherwise, where for $1 \le i \le t$, $S_i =$} \begin{cases}v_i &{\text{ if $v_i$ is not a cut vertex in $G$}}\\Order(G_{v_i}, v_i),v_i & \text{ otherwise.} \end{cases} 
\end{cases}
\end{equation*}
\end{definition}
The following lemma gives a precise description of the order in which the algorithm encounters vertices of $G'$.
\begin{lemma}\label{orderEncounter}
Let $G'$ be a connected outerplanar graph with at least two vertices, such that for every cut vertex $x$ of $G'$, $G'\setminus x$ has exactly two connected components. If $G'$ is given as the input graph to Algorithm \ref{algBiConnect} and $v_0$ is the non-cut vertex in the root block of $G'$ from which the algorithm starts the traversal, then $Order(G', v_0)$ is the order in which Algorithm \ref{algBiConnect} encounters the vertices of $G'$. 
\end{lemma}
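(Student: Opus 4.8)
The plan is to prove Lemma~\ref{orderEncounter} by induction on the number of vertices of $G'$, mirroring the recursive structure of the definition of $Order(G',v_0)$. The base case is when $G'$ is a single edge $(v_0,v_1)$: here there are no cut vertices, so after setting $v'=next_B(v_0)=v_1$ the inner \texttt{while} loop at line~\ref{innerWhile} is not entered, $v_1$ is marked completed, and the algorithm halts; thus the encounter order is $v_0,v_1 = Order(G',v_0)$. When the root block is $2$-vertex-connected with Hamiltonian cycle $v_0,v_1,\ldots,v_t,v_0$ and $G'$ itself is $2$-vertex-connected, there are again no cut vertices, the algorithm simply walks the Hamiltonian cycle via repeated $v'=next_B(v)$, and the encounter order is $v_0,v_1,\ldots,v_t = Order(G',v_0)$.

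For the inductive step, suppose the root block $B_0$ of $G'$ is non-trivial (the trivial case is analogous and simpler) with Hamiltonian cycle $v_0,v_1,\ldots,v_t,v_0$, and some of the $v_i$ are cut vertices. I would track the main loop (lines~\ref{outerWhile}--\ref{reassignline}) and argue that it processes the $v_i$ in order of increasing $i$, and that when it reaches a cut vertex $v_i$ of $B_0$ with $bypass(v_i)$ still FALSE, the inner loop at line~\ref{innerWhile} sets $bypass(v_i)=$ TRUE, descends into $child_{v_i}(B)$ via line~\ref{childBlockLine}, and continues bypassing as long as it keeps hitting not-yet-bypassed cut vertices. The key claim is that, from the moment the algorithm first descends into $G'_{v_i}$ until it returns to encounter $v_i$ itself, the sequence of vertices it encounters is exactly $Order(G'_{v_i},v_i)$. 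To see this, observe that $G'_{v_i}$ is a connected outerplanar graph on fewer vertices in which $v_i$ plays the role of a non-cut vertex of its root block $child_{v_i}(B_0)$ (it is non-cut \emph{within} $G'_{v_i}$ because in $G'$ every cut vertex lies in exactly two blocks, one of which is the parent side excluded from $G'_{v_i}$), and $G'_{v_i}\setminus x$ has exactly two components for each of its cut vertices $x$; moreover the block-tree of $G'_{v_i}$ rooted at $child_{v_i}(B_0)$ is exactly the subtree hanging below that block in $G'$. The subtlety is that the algorithm does not \emph{restart} on $G'_{v_i}$; instead it reuses the global state ($B$, $v$, $v'$, the $bypass$ and $completed$ flags, $completedCount$). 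So I would phrase the induction as a statement about the algorithm's behaviour on a contiguous stretch of its execution: ``if at some point the current block $B$ equals the root block of a subgraph $H$ of the above type, $v'$ has just been set to the vertex $next_B$ of a non-cut vertex of $H$'s root block, and no vertex of $H$ has yet been completed, then the algorithm encounters the vertices of $H$ in the order $Order(H,v)$ and, upon finishing, $B$ is restored to $parent(\text{root block of }H)$ by line~\ref{parentBlockLine} and $v$ equals the last vertex of $Order(H,v)$.''

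The bookkeeping that makes this work cleanly is: (a) the inner \texttt{while} at line~\ref{innerWhile} walks down a maximal chain $v_i = x^{(1)}, x^{(2)}=next_{child_{x^{(1)}}(B)}(x^{(1)}),\ldots$ of successive cut vertices, each being bypassed exactly once — this is the ``sequence of one or more cut vertices bypassed'' described in the prose — landing finally on a vertex $v'$ that is either a non-cut vertex or an already-bypassed cut vertex, and the single edge $(v,v')$ added at line~\ref{edgeline} is the edge promised; (b) once inside the deepest such block, $v'$ is a non-cut vertex (or an already-bypassed one, handled by the $parent(B)$ step at line~\ref{parentBlockLine}), so the recursion unwinds from the innermost level outward, matching the nesting $v, S_1,\ldots,S_t$ with $S_i = Order(G'_{v_i},v_i),v_i$; (c) the test at line~\ref{checkLine} ensures that after an already-bypassed cut vertex is finally encountered (i.e.\ after its whole child-subgraph has been completed and we have ``come back'' to it), the current block pointer is moved back up to the parent, so traversal of the parent block resumes at $next_{parent(B)}(v_i)=v_{i+1}$. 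I would check that $completedCount$ increases by exactly one per iteration of the outer loop and that the loop terminates precisely when all $|V(G')|$ vertices have been completed, which is what guarantees every vertex is encountered exactly once.

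The main obstacle I expect is purely organizational rather than mathematical: because the algorithm is iterative with a shared global state rather than genuinely recursive, the induction hypothesis has to be stated as an invariant about a \emph{segment} of the run (with carefully specified entry and exit conditions on $B$, $v$, $v'$, and the flags), and one must verify that the entry conditions for the recursive call on $G'_{v_i}$ are exactly the exit-plus-transition conditions produced by lines~\ref{bypassline}--\ref{childBlockLine}, and symmetrically that the exit conditions feed correctly back into the continued traversal of the parent block. Once that invariant is correctly formulated, each of the finitely many cases (root block trivial vs.\ non-trivial; $v'$ a non-cut vertex, a fresh cut vertex, or an already-bypassed cut vertex) reduces to a short unwinding of a few lines of pseudocode against the corresponding clause in the definition of $Order$.
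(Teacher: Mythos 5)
Your plan is correct and follows essentially the same route as the paper's own proof: the paper also reduces the lemma to a segment invariant about a contiguous stretch of the run (its Lemma~\ref{recursiveTraversal}, with explicit entry conditions $C_1$--$C_3$ and exit conditions $E_1$--$E_4$ on $B$, $v'$, the flags and $completedCount$), proved by induction on the size of the subgraph $G'_x$ below a bypassed cut vertex (number of blocks there, versus your vertex count -- an immaterial difference), together with the observation that between cut vertices the algorithm just walks the Hamiltonian cycle of the current block (its Lemma~\ref{easyprop}). Only a small bookkeeping point would need fixing when writing it out: the window from the descent into $G'_{v_i}$ back to $v_i$ yields $S_1,\ldots,S_t,v_i$, and the leading $v_i$ of $Order(G'_{v_i},v_i)$ is the encounter of $v_i$ in the parent block just before the descent, exactly as the paper's condition $E_4$ states.
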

Since the proof of this lemma is easy but is lengthy and detailed, in order to make the reading easier we have moved the proof to the Appendix. Now we will use Lemma \ref{orderEncounter} and derive some properties maintained by Algorithm \ref{algBiConnect}. 
\begin{property}\label{completionProperty}
\begin{enumerate}
\item Every non-cut vertex of $G'$ is encountered exactly once. Every cut vertex of $G'$ is encountered exactly twice. 
\item A non-cut vertex is completed when it is encountered for the first time. A cut vertex is bypassed when it is encountered for the first time and is completed when it is encountered for the second time. Each cut vertex is bypassed exactly once. 
\item Every vertex is completed exactly once and a vertex that is declared completed is never encountered again. 
\end{enumerate}
\end{property}
\begin{proof}
The first part of the property follows directly from Lemma \ref{orderEncounter}. 

To prove the second part, observe that a vertex $u$ is encountered only in Lines \ref{initialVertexLine}, \ref{reassignv'} or \ref{childBlockLine}. If $v'=u$ is a non-cut vertex, the inner while-loop will not be entered. The vertex $u$ is completed in Line \ref{completeLine1} or Line \ref{completeLine2} before another vertex is encountered by executing Line \ref{reassignv'} again.

For any cut vertex $x$, $bypass(x)=$ FALSE initially and it is changed only after $x$ is encountered and the algorithm enters the inner while-loop with $v'=x$. When $x$ is first encountered, $bypass(x)$ is FALSE and the algorithm gets into the inner while-loop and inside this loop $bypass(x)$ is set to TRUE. After this, $bypass(x)$ is never set to FALSE. Therefore, when $x$ is encountered for the second time, the inner while-loop is not entered and before $v'$ gets reassigned, $x$ is completed in Line \ref{completeLine2}. 

The third part of the property follows from the first two parts and Lemma \ref{orderEncounter}.
\qed
\end{proof}
\begin{lemma}\label{surjectionLemma} 
Each cut vertex of $G'$ is bypassed exactly once by the algorithm and is associated with a unique edge in $E''\setminus E'$. Every edge $e \in E'' \setminus E'$ has a non-empty sequence of bypassed cut vertices associated with it, given by $bypassSeq(e)$. Hence, the function $f :$ cut vertices of $G' \mapsto E'' \setminus E'$, defined as
\begin{equation*}
 f(x)=e\text{ such that }x\text{ is present in }bypassSeq(e)
\end{equation*} 
is a surjective map.
\end{lemma}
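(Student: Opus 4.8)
The plan is to verify the three assertions of Lemma~\ref{surjectionLemma} in sequence, drawing on Property~\ref{completionProperty} and the structure of Algorithm~\ref{algBiConnect}. First I would establish that each cut vertex is bypassed exactly once and mapped to a unique edge. By Property~\ref{completionProperty}(2), every cut vertex $x$ is bypassed exactly once, namely when it is first encountered and the algorithm enters the inner while-loop (Line~\ref{innerWhile}) with $v'=x$; at that moment $x$ is appended to the current string $sequence$ in Line~\ref{bypassline}. The key observation is that the inner while-loop, once entered, runs until it reaches a non-cut vertex or an already-bypassed cut vertex $v'$, and then (since $bypassLoopTaken$ is TRUE) the algorithm executes Lines~\ref{checkLoopTaken}--\ref{edgeline}, creating exactly one edge $e=(v,v')$ and recording $bypassSeq(e)=sequence$. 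Thus every cut vertex $x$ lands in exactly one such $sequence$, associated with exactly one edge; this gives a well-defined $f(x)$.

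Next I would argue that $f$ is defined on \emph{all} cut vertices and that its image lies in $E''\setminus E'$, i.e.\ that every edge added by the algorithm is genuinely new. For the latter, note that an edge $e=(v,v')$ is added only after at least one cut vertex has been bypassed, so $v$ and $v'$ lie in different blocks of $G'$ that become merged; in particular $v$ and $v'$ were non-adjacent in the current graph (a direct edge would have made them clockwise neighbours in a common block, contradicting that a cut vertex separates them) — hence $e\notin E'$. Alternatively, and more cleanly, one can simply observe that $E'' \setminus E'$ is by construction the set of all edges $e$ for which $bypassSeq(e)$ was set, so the codomain statement is immediate from the algorithm. For surjectivity: every edge $e\in E''\setminus E'$ is added in Line~\ref{edgeline} only inside the \texttt{if} block guarded by $bypassLoopTaken=\text{TRUE}$, which means the inner while-loop executed at least once, so $sequence$ is a non-empty string of bypassed cut vertices; hence $bypassSeq(e)$ is non-empty and every $e$ has at least one cut vertex $x$ with $f(x)=e$.

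Finally I would combine these: $f$ is a total function on the cut vertices of $G'$ (every cut vertex is bypassed, hence belongs to some $bypassSeq(e)$), it is well-defined (each cut vertex belongs to exactly one $bypassSeq(e)$, since it is bypassed exactly once and appended to $sequence$ exactly once before that string is frozen as $bypassSeq(e)$), and it is onto (every $e\in E''\setminus E'$ has non-empty $bypassSeq(e)$). Therefore $f$ is a surjective map, as claimed.

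The main obstacle is the bookkeeping around the string variable $sequence$: one must check that it is correctly re-initialised to \texttt{EmptyString} at the start of each iteration of the outer while-loop (Line~\ref{initialSeq}), that it accumulates precisely the cut vertices bypassed in the current run of the inner while-loop, and that it is read off into $bypassSeq(e)$ before being reset — so that distinct edges get disjoint sequences and no cut vertex is double-counted or lost. This is exactly the content that Lemma~\ref{orderEncounter} and Property~\ref{completionProperty} are set up to support, so the argument is mostly a careful audit of the control flow rather than a substantive new idea.
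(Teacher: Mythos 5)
Your proposal is correct and follows essentially the same route as the paper's proof: it invokes Property~\ref{completionProperty} for "each cut vertex is bypassed exactly once," tracks the control flow of $sequence$ and $bypassLoopTaken$ to show each cut vertex lands in exactly one $bypassSeq(e)$, and uses the $bypassLoopTaken$ guard to show every added edge has a non-empty bypass sequence, hence surjectivity. Your extra aside that the added edges are genuinely not in $E'$ is handled by the paper separately (in Property~\ref{pathBetweenPairs}) and is not needed here.
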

\begin{proof}
From Property \ref{completionProperty}, each cut vertex $x$ of $G'$ is bypassed exactly once by the algorithm. Note that when $x$ is bypassed in Line \ref{bypassline}, $x$ is appended to the string $sequence$ and the variable $bypassLoopTaken$ was set to TRUE in the previous line. On exiting the inner while-loop, since $bypassLoopTaken$=TRUE, an edge is added in Line \ref{edgeline}. Before adding the new edge $e$, in the previous line the algorithm set $bypassSeq(e)$ to be the sequence of cut vertices accumulated in the variable $sequence$. As we have seen, $x$ is present in the string $sequence$ and it is clear from the algorithm that $sequence$ was not reset to emptystring before assigning it to $bypassSeq(e)$. Therefore, $x$ is present in $bypassSeq(e)$. In the next iteration of the outer while-loop, $sequence$ is reset to emptystring. Since $x$ is bypassed only once, it will not be added to the string $sequence$ again nor it will be part of $bypassSeq(e')$ for any other edge $e'$. 

An edge $e$ gets added in Line \ref{edgeline} only if $bypassLoopTaken$ is TRUE, while executing Line \ref{checkLoopTaken}. However, the variable $bypassLoopTaken$ is set to FALSE in the outer while-loop every time just before entering the inner while-loop and is set to TRUE only inside the inner while-loop, where at least one cut vertex is bypassed in Line \ref{bypassline} and is added to $sequence$. Until the algorithm exits from the inner while-loop, and the next edge $e$ is added, $bypassLoopTaken$ is maintained to be TRUE. This ensures that $bypassSeq(e)$ is a non-empty string for each edge $e \in E'' \setminus E'$. 
\qed
\end{proof}
\begin{property}\label{pathBetweenPairs}
Let $e=(u_i, v_i)$ be an edge such that, at the time of adding the edge $e$ in Line \ref{edgeline} of Algorithm \ref{algBiConnect}, the variable $v$ contained the value $u_i$ and the variable $v'$ contained the value $v_i$. 
\begin{itemize}
\item The vertex $v=u_i$ is already completed at this time and the vertex $v_i$ is completed subsequently. 
\item Cut vertices that belong to $bypassSeq(e)$ are precisely the vertices bypassed during the period from the execution of Line \ref{reassignv'} just before adding the edge $e$ in Line \ref{edgeline} to the time when $e$ is added in Line \ref{edgeline}. These vertices were bypassed in the order in which they appear in $bypassSeq(e)$.
\item Each cut vertex bypassed before bypassing the first cut vertex that belong to $bypassSeq(e)$ belongs to the bypass sequence of one of the edges in $E'' \setminus E'$ which was added before $e$.
\item If $bypassSeq(e)=x_1, x_2,\ldots, x_k$, then $u_i=x_0, x_1, x_2, \ldots, x_k, x_{k+1}=v_i$ is a path in $G'$ such that 
\begin{itemize}
\item for each $1 \le i \le k$, $x_{i}$ is the successor of $x_{i-1}$ in the Hamiltonian cycle of the parent block in $G'$, at the cut vertex $x_i$.
\item $v_i=x_{k+1}$ is the successor of $x_{k}$ in the Hamiltonian cycle of the child block in $G'$, at the cut vertex $x_k$.
\end{itemize}
Since each bypassed vertex is a cut vertex in $G'$, it is easy to see that $e=(u_i, v_i)$ was not already an edge in $G'$.
\end{itemize}
\end{property}
\begin{proof}
Suppose that at the time of adding the edge $e$ in Line \ref{edgeline} of Algorithm \ref{algBiConnect}, the variable $v$ contained the value $u_i$ and the variable $v'$ contained the value $v_i$.
Observe that the variable $v$ always gets its value from the variable $v'$ in lines \ref{reassignline1} and \ref{reassignline} and just before this, in Lines \ref{completeLine1} and \ref{completeLine2} $v'$ was declared completed. Therefore the vertex assigned to $v$ is always a completed vertex. Therefore, at the time of adding the edge $e$ in Line \ref{edgeline}, the vertex $v=u_i$ is already completed. After the edge is added in Line \ref{edgeline}, in Line \ref{completeLine2} the vertex assigned to $v'=v_i$ is completed. Since by Property \ref{completionProperty} a vertex is completed only once, this is the only time at which $v_i$ is completed. 

Since each cut vertex is bypassed only once, and is added to the bypass sequence of the next edge added (see the proof of Lemma \ref{surjectionLemma}), it is clear that if a cut vertex is bypassed before $x_1$, it should belong to the bypass sequence of an edge in $E''\setminus E'$ added before $e$. The remaining parts of the property are easy to deduce from lines \ref{innerWhile} - \ref{childBlockLine} and Line \ref{seqLine} of the algorithm.
\qed
\end{proof}
\begin{lemma}\label{biconnect}
  $G''$ is $2$-vertex-connected.
\end{lemma}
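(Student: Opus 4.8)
The plan is to show that $G''$ has no cut vertex by combining the structural description of the algorithm (Lemma~\ref{orderEncounter} and Property~\ref{pathBetweenPairs}) with the outerplanarity-preserving edge additions of Lemma~\ref{extensionlemma} and Lemma~\ref{Hamiltonianunchanged}. The key observation is that the edges in $E'' \setminus E'$ are exactly the edges $e=(u_i,v_i)$ added in Line~\ref{edgeline}, and each such edge ``shortcuts'' a path $u_i = x_0, x_1, \ldots, x_k, x_{k+1} = v_i$ through a non-empty sequence $\mathit{bypassSeq}(e) = x_1,\ldots,x_k$ of bypassed cut vertices of $G'$, where (by Property~\ref{pathBetweenPairs}) consecutive vertices of this path are Hamiltonian-cycle successors in the appropriate parent or child block. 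So by Lemma~\ref{extensionlemma}, each such addition keeps the graph outerplanar (this is needed so that Lemma~\ref{Hamiltonianunchanged} applies at each step), and more importantly, by Lemma~\ref{Hamiltonianunchanged} the blocks $B_0,\ldots,B_k$ containing the edges of this path merge into a single new block $B'$ of the current graph whose Hamiltonian cycle contains $u_i,x_1,\ldots,x_k,v_i$ consecutively.

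First I would set up an induction over the iterations of the outer while-loop (equivalently, over the sequence of edges added), maintaining the invariant that after each edge addition the current graph is outerplanar and that every cut vertex that has been \emph{bypassed} so far is no longer a cut vertex in the current graph. The base case is $G'$ itself. For the inductive step, when edge $e=(u_i,v_i)$ is added with $\mathit{bypassSeq}(e) = x_1,\ldots,x_k$, I invoke Lemma~\ref{extensionlemma} on the path $u_i,x_1,\ldots,x_k,v_i$ to conclude outerplanarity is preserved, and Lemma~\ref{Hamiltonianunchanged} to conclude that each $x_j$, $1 \le j \le k$, ceases to be a cut vertex because its two incident blocks (the parent block and child block at $x_j$ in $G'$) are now absorbed into a single block. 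I must also check that no new cut vertices are created: Lemma~\ref{Hamiltonianunchanged} says all other blocks are unchanged and only $B_0,\ldots,B_k$ merge, so no vertex that was a non-cut vertex becomes a cut vertex.

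Then I would argue that \emph{every} cut vertex of $G'$ gets bypassed exactly once (Property~\ref{completionProperty}, part 2), so by the end of Algorithm~\ref{algBiConnect} every cut vertex of $G'$ has been eliminated. It remains to check that $G''$ is connected (immediate, since $G' \subseteq G''$ and $G'$ is connected) and that $G''$ has at least two vertices and at least one cycle (the case of a single vertex having been excluded), hence a connected graph with no cut vertex and at least two vertices is $2$-vertex-connected by definition of a block. One subtle point I would address carefully: I need that the path $u_i, x_1, \ldots, x_k, v_i$ really satisfies hypothesis (i) of Lemma~\ref{extensionlemma} --- that it shares at most one edge with any block of the \emph{current} graph --- and hypothesis (ii) about successors; Property~\ref{pathBetweenPairs} gives exactly the successor relations, and the fact that consecutive edges $(x_{j-1},x_j)$ and $(x_j, x_{j+1})$ lie in the parent block and child block at $x_j$ respectively (which are distinct because $x_j$ is still a cut vertex at that moment, by the induction hypothesis since $x_j$ has not yet been bypassed) gives hypothesis (i).

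The main obstacle I expect is the bookkeeping to guarantee that when edge $e$ is added, the vertices $x_1,\ldots,x_k$ of $\mathit{bypassSeq}(e)$ are still cut vertices in the \emph{current} graph (not yet merged away by an earlier edge addition), so that the two blocks meeting at each $x_j$ are genuinely distinct and Lemma~\ref{extensionlemma}(i) holds. This follows from Property~\ref{pathBetweenPairs} --- cut vertices are bypassed in the order they appear in $\mathit{bypassSeq}(e)$, and a cut vertex bypassed \emph{before} $x_1$ belongs to the bypass sequence of an \emph{earlier} edge, while those in $\mathit{bypassSeq}(e)$ are bypassed during the current inner-while-loop run, after which $e$ is immediately added --- so no intervening edge addition touches $x_1,\ldots,x_k$. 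Assembling these pieces cleanly, while citing Lemma~\ref{orderEncounter}, Property~\ref{completionProperty}, Property~\ref{pathBetweenPairs}, Lemma~\ref{extensionlemma} and Lemma~\ref{Hamiltonianunchanged} in the right order, is the bulk of the work; the actual graph-theoretic content is light.
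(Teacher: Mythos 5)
Your route is considerably heavier than what the statement needs, and the extra machinery is where the gap sits. To conclude that a bypassed vertex stops being a cut vertex you go through Lemma \ref{extensionlemma} and Lemma \ref{Hamiltonianunchanged} applied in the \emph{current} graph $G'_{i-1}$, and for that you must verify hypothesis (ii) of Lemma \ref{extensionlemma} in $G'_{i-1}$; you claim Property \ref{pathBetweenPairs} supplies it, but that property only describes successors in $G'$. The block containing the first path edge $(u_i, x^i_1)$, namely $B^i_0$, is a \emph{touched} block (Property \ref{blocksconnected}): it has already been merged into the block formed by the earlier edge additions, and each earlier addition of an edge $(u_j, v_j)$ changes the Hamiltonian successor of $u_j$ (Lemma \ref{Hamiltonianunchanged}, part 3). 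So you still owe an argument that $x^i_1$ remains the successor of $u_i$ in the Hamiltonian cycle of that merged block; in the paper this is exactly the role of Property \ref{initialvertex} ($u_i \ne u_j$ for $j<i$) combined with part 3 of Lemma \ref{Hamiltonianunchanged}, and it is the delicate point of the \emph{separate} lemma that $G''$ is outerplanar. Your sketch never invokes it, so the inductive step as written does not go through (your treatment of hypothesis (i) via the bypass order is fine, since the child blocks at $x^i_1,\ldots,x^i_{k_i}$ are untouched when $e_i$ is added).

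The paper's own proof of this lemma avoids all of this: $2$-connectivity needs neither outerplanarity nor any block bookkeeping. For an added edge $e=(u,v)$ with $bypassSeq(e)=x_1,\ldots,x_k$, the path $u, x_1, \ldots, x_k, v$ of Property \ref{pathBetweenPairs} lies in $G'$, so $e$ closes a cycle through these vertices inside $G''$. Since $G'\setminus x_j$ has exactly two components, one containing $x_{j-1}$ and the other containing $x_{j+1}$, and the cycle with $x_j$ removed still joins $x_{j-1}$ to $x_{j+1}$, the vertex $x_j$ is not a cut vertex of $G''$. By Lemma \ref{surjectionLemma} every cut vertex of $G'$ occurs in some bypass sequence, and non-cut vertices of $G'$ remain non-cut in any supergraph, so $G''$ has no cut vertex. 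If you wish to keep your block-merging route, you must import Property \ref{initialvertex} and essentially redo the paper's invariant for the outerplanarity of $G''$; otherwise the direct cycle argument above is both shorter and self-contained.
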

\begin{proof}
  We show that $G''$ does not have any cut vertices. Since $G''$ is a supergraph of $G'$, if a vertex $x$ is not a cut vertex in $G'$, it will not be a cut vertex in $G''$. We need to show that the cut vertices in $G'$ become non-cut vertices in $G''$. Consider a newly added edge $(u, v)$ of $G''$. Without loss of generality, assume that $u$ was completed before $v$ in the traversal and for $e=(u, v)$, $bypassSeq(e)=(x_1,x_2, \ldots, x_k)$. By Property \ref{pathBetweenPairs}, $u, x_1, x_2, \ldots, x_k, v$ is a path in $G'$. When our algorithm adds the edge $(u, v)$, it creates the cycle $u, x_1, x_2, \ldots, x_k, v, u$ in the resultant graph. Recall that, for each $1 \le i \le k$, $G'\setminus x_i$ had exactly two components; one containing $x_{i-1}$ and the other containing $x_{i+1}$. After the addition of the edge $(u, v)$, vertices $x_{i-1}$, $x_i$ and $x_{i+1}$ lie on a common cycle. Hence, after the edge $(u, v)$ is added, for $1 \le i \le k$, $x_i$ is no longer a cut vertex. Since by Lemma \ref{surjectionLemma} every cut vertex in $G'$ was part of the bypass sequence associated with some edge in $E'' \setminus E'$, all of them become non-cut vertices in $G''$. 
\qed
\end{proof}
To prove that $G''$ is outerplanar, we can imagine the edges in $E'' \setminus E'$ being added to $G'$ one at a time. Our method is to repeatedly use Lemma \ref{extensionlemma} and show that after each edge addition, the resultant graph remains outerplanar. We will first note down some properties maintained by Algorithm \ref{algBiConnect}.

Let $\{e_i= (u_i, v_i)\mid 1 \le i \le m =|E''\setminus E'|\}$ be the set of edges added by Algorithm \ref{algBiConnect}. Assume that, for each $1 \le i< m$, $(u_i, v_i)$ was added before $(u_{i+1}, v_{i+1})$ and at the time of adding the edge $e_i$ in Line \ref{edgeline} of Algorithm \ref{algBiConnect}, the variable $v$ contained the value $u_i$ and the variable $v'$ contained the value $v_i$. By Property \ref{pathBetweenPairs}, $u_i$ is completed before $v_i$. Let $bypassSeq((u_i, v_i))=x^i_1, x^i_2, \ldots, x^i_{k_i}$, where $k_i\ge 1$, and $P^i=(u_i = x^i_0 , x^i_1 , x^i_2 , . . . , x^i_{k_i} , x^i_{k_i+1} = v_i)$ be the associated path in $G'$ (Property \ref{pathBetweenPairs}). Let $B^i_j$ denote the block containing the edge $(x^i_j,x^i_{j+1})$ in $G'$. Clearly, $B^1_0$ is the root block of $G'$. The following statement is an immediate corollary of Property \ref{pathBetweenPairs}, with the definitions above.
\begin{property}\label{prop3}
For each $0 \le j \le k_i$, the vertex $x^i_{j+1}$ is the successor of the vertex $x^i_j$ in the Hamiltonian cycle of the block $B^i_{j}$. The path $P^i$ shares only one edge with any block of $G'$.   
\end{property}
\begin{property}\label{initialvertex}
 If $1 \le i < j \le m = |E''\setminus E'|$, then $u_i \ne u_j$. 
\end{property}
\begin{proof}
By our assumption, at the time of adding the edge $(u_i, v_i)$, we had $v=u_i$ and $v'=v_i$. By Property \ref{pathBetweenPairs}, the vertex $v=u_i$ was already completed at this time. After adding the edge $(v, v')=(u_i, v_i)$, the algorithm reassigns $v=v'=v_i$ in Line \ref{reassignline}. By Property \ref{completionProperty}, the algorithm will never encounter the completed vertex $u_i$ again, and this means that $v'$ is never set to $u_i$ in future. This also implies that $v$ is never set to $u_i$ in future, since $v$ gets reassigned later only in Line \ref{reassignline}, where it gets its value from the variable $v'$. Since $v=u_j$ when the edge $(u_j, v_j)$ is added, we have $u_i \ne u_j$. 
\qed
\end{proof}
At a stage of Algorithm \ref{algBiConnect}, we say that a non-root block $B$ is \textbf{touched}, if at that stage Algorithm \ref{algBiConnect} has already bypassed the cut vertex $y$ such that $B$ is the child block containing $y$. At any stage of Algorithm \ref{algBiConnect}, we consider the root block of $G'$ to be touched. 
\begin{property}\label{touchedblocks}
 When Algorithm \ref{algBiConnect} has just finished adding the edge $(u_i, v_i)$, the touched blocks are precisely $B^1_0 \cup \bigcup_{1 \le j \le i}\{B^j_1, \ldots, B^j_{k_j}\}$. This implies that if $i<m$, the blocks $\{B^{i+1}_1, B^{i+1}_2, \ldots, B^{i+1}_{k_i}\}$ remain untouched when the algorithm has just finished adding the edge $(u_{i}, v_{i})$. 
\end{property}
\begin{proof}
The root block $B^1_0$ is always a touched block by definition and the other touched blocks when Algorithm \ref{algBiConnect} has just finished adding the edge $(u_i, v_i)$ are the child blocks attached to cut vertices bypassed so far. However, by Property \ref{pathBetweenPairs}, when Algorithm \ref{algBiConnect} has just finished adding the edge $(u_i, v_i)$, a cut vertex $x$ is already bypassed if and only if $x$ belongs to $bypassSeq(e_j)$ for some $j \le i$. For $j \le i$, we had $bypassSeq((u_j, v_j))=x^j_1, x^j_2, \ldots, x^j_{k_j}$ and for $1\le l\le k_j$, $B^j_l$ is the child block attached at $x^j_l$ by the last part of Property \ref{pathBetweenPairs}. Hence, the initial part of the property holds. From this, the latter part of the property follows, by Lemma \ref{surjectionLemma}.
\qed
\end{proof}
\begin{property}\label{blocksconnected}
 For each $2 \le i \le m$, when the algorithm has just finished adding the edge $(u_{i-1}, v_{i-1})$, the block $B^i_0$ is a touched block. 
\end{property}
\begin{proof}
If $B^i_0$ is the root block, the property is trivially true. Assume that this is not the case. 

Consider the situation when the algorithm has just finished adding the edge $(u_{i-1}, v_{i-1})$ in Line \ref{edgeline}. By Property \ref{pathBetweenPairs}, the next cut vertex to be bypassed is $x^i_1$, which is the first vertex appearing in $bypassSeq(e_i)$, and $B^i_0$ is the parent block attached at $x^i_1$. 
Let $y$ be the cut vertex such that $B^i_0$ is the child block at $y$. If $y$ has been encountered by now, $y$ would have been bypassed (Lemma \ref{completionProperty}), making $B^i_0$ a touched block. Since a cut vertex is bypassed when it is encountered for the first time (Lemma \ref{completionProperty}) and $y$ is the first vertex the algorithm encounters among the vertices in the block $B^i_0$ (Lemma \ref{orderEncounter}), if $y$ is not yet encountered, it will contradict the fact that $x^i_1$ is the next cut vertex to be bypassed, because $x^i_1$ is a vertex in $B^i_0$. Therefore $y$ should have been encountered earlier and therefore, $B^i_0$ is a touched block.
\qed
\end{proof}
\begin{lemma}
 $G''$ is outerplanar. 
\end{lemma}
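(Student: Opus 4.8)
The plan is to prove that $G''$ is outerplanar by adding the edges $e_1, e_2, \ldots, e_m$ of $E'' \setminus E'$ to $G'$ one at a time, in the order in which Algorithm \ref{algBiConnect} adds them, and showing by induction on $i$ that the graph $G'_i$ obtained after adding $e_1, \ldots, e_i$ is outerplanar. The base case is $G'_0 = G'$, which is outerplanar by Lemma \ref{outerplanarfirst}. For the inductive step I would assume $G'_{i-1}$ is outerplanar and apply Lemma \ref{extensionlemma} to the pair $(u_i, v_i)$ along the path $P^i = (u_i = x^i_0, x^i_1, \ldots, x^i_{k_i}, x^i_{k_i+1} = v_i)$ supplied by Property \ref{pathBetweenPairs}. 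To invoke Lemma \ref{extensionlemma} I must check its two hypotheses hold \emph{in the graph $G'_{i-1}$} (not merely in $G'$): namely that $P^i$ shares at most one edge with any block of $G'_{i-1}$, and that whenever the block of $G'_{i-1}$ containing $(x^i_j, x^i_{j+1})$ is non-trivial, $x^i_{j+1}$ is the successor of $x^i_j$ in that block's Hamiltonian cycle. I also need $u_i$ and $v_i$ to be non-adjacent in $G'_{i-1}$, which follows from the last sentence of Property \ref{pathBetweenPairs} together with the observation (Property \ref{initialvertex}) that earlier edge additions do not re-use these endpoints in a way that would create the edge.

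The key bridge between ``properties stated for $G'$'' and ``properties needed for $G'_{i-1}$'' is to track how the block structure evolves. Each earlier edge addition $e_j$ ($j < i$), by Lemma \ref{Hamiltonianunchanged}, merely merges the blocks $B^j_0, B^j_1, \ldots, B^j_{k_j}$ of the then-current graph into a single new block, while preserving, for every vertex other than the $x^j_\ell$'s, its Hamiltonian successor and predecessor, and preserving for each $x^j_\ell$ its predecessor from $B^j_\ell$ and its successor from $B^j_{\ell-1}$. So after all of $e_1, \ldots, e_{i-1}$ have been added, the blocks of $G'_{i-1}$ that the path $P^i$ passes through are: the (possibly already-merged) block $B^i_0$ — which by Property \ref{blocksconnected} is a touched block, hence has been folded into a merged block whose relevant arc around $x^i_1$ is unchanged — and the blocks $B^i_1, \ldots, B^i_{k_i}$, which by Property \ref{touchedblocks} are still \emph{untouched}, hence are exactly as in $G'$. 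This is precisely what guarantees (i) and (ii): consecutive edges $(x^i_{j-1}, x^i_j)$ and $(x^i_j, x^i_{j+1})$ of $P^i$ lie in different blocks of $G'_{i-1}$ because $x^i_j$ is still a cut vertex of $G'_{i-1}$ separating the touched side from the untouched child block $B^i_j$; and by Property \ref{prop3} combined with the successor-preservation clause of Lemma \ref{Hamiltonianunchanged}, $x^i_{j+1}$ remains the successor of $x^i_j$ in whatever block of $G'_{i-1}$ now contains that edge.

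I would therefore organize the write-up as: (1) state that we add the edges in order and induct; (2) invoke Lemma \ref{Hamiltonianunchanged} to describe $G'_{i-1}$'s blocks along $P^i$, using Properties \ref{touchedblocks} and \ref{blocksconnected} to identify which blocks are untouched (hence unchanged from $G'$) and which have merged; (3) verify hypotheses (i) and (ii) of Lemma \ref{extensionlemma} for $P^i$ in $G'_{i-1}$, using Property \ref{prop3} for the successor condition and the cut-vertex status of the $x^i_j$'s for the one-edge-per-block condition; (4) conclude via Lemma \ref{extensionlemma} that $G'_i$ is outerplanar, and via Lemma \ref{Hamiltonianunchanged} that its block structure is again of the controlled form, closing the induction. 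Since $G'' = G'_m$, outerplanarity of $G''$ follows.

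The main obstacle I expect is step (3): carefully arguing that hypothesis (ii) survives the earlier merges. One must be sure that when an earlier edge $e_j$ merged blocks including $B^i_0$, the specific Hamiltonian-cycle successor of $x^i_1$ that we need (namely $x^i_2$, or more precisely the successor relation along $P^i$ inside $B^i_0$'s merged descendant) is one of the successor relations that Lemma \ref{Hamiltonianunchanged} promises to preserve — i.e.\ that $x^i_1$ plays the role of a vertex whose successor comes from the ``right'' block in the merge. This is a bookkeeping argument about the direction of traversal in Algorithm \ref{algBiConnect} (the successor of a cut vertex in its parent block is exactly where the traversal resumes after bypassing), and it is routine but is the place where the proof could go wrong if the orientation conventions are not lined up correctly. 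Everything else is a direct chaining of the already-established properties.
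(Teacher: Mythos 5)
Your proposal follows essentially the same route as the paper: induct over the edges $e_1,\ldots,e_m$ in the order Algorithm \ref{algBiConnect} adds them, apply Lemma \ref{extensionlemma} to $P^i$ in $G'_{i-1}$, and justify its hypotheses by combining Lemma \ref{Hamiltonianunchanged} with Properties \ref{prop3}, \ref{touchedblocks}, \ref{blocksconnected} and \ref{initialvertex}; the paper packages your ``bookkeeping obstacle'' as an explicit invariant (the merged block $M^i$ with $x^{i+1}_1$ the successor of $u_{i+1}$ in it), which is exactly the orientation argument you anticipate. One small correction: the delicate successor relation is that of $u_i=x^i_0$ (whose successor must remain $x^i_1$ in the merged block $M^{i-1}$), not that of $x^i_1$, since the edges $(x^i_j,x^i_{j+1})$ for $j\ge 1$ lie in the untouched blocks $B^i_j$ and need no such care.
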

\begin{proof}
Let $G'_0=G'$ and for each $1 \le i \le m$, let $G'_i(V, E'_i)$ be the graph obtained by assigning $E'_i=E' \cup \{(u_j, v_j)\mid 1 \le j \le i\}$. Let $M^0$ denote the root block of $G'$. 
We will prove that Algorithm \ref{algBiConnect} maintains the following invariants for each $0 \le i \le m$: 
\begin{itemize}
 \item The graph $G'_{i}$ is outerplanar. 
\item When the algorithm has just finished adding the edge $(u_i, v_i)$, the set of touched blocks, ${\bigcup_{1 \le j \le i}\{B^j_0, B^j_1, \ldots, B^j_{k_j}\}}$, have merged together and formed a single block, which we call as the \textbf{merged block} $M^i$ in $G'_{i}$. 
$M^i$ will be taken as the root block of $G'_{i}$. The other blocks of $G'$ remain the same in $G'_i$.
\item If $i< m$, $x^{i+1}_1$ is the successor of $u_{i+1}$ in the Hamiltonian cycle of the block $M^i$. 
\end{itemize}
By Lemma \ref{outerplanarfirst}, $G'_0=G'$ is outerplanar and it is clear that the above invariants hold for $G'_0$. Assume that the invariants hold for each $i$, where $1 \le i < h \le m$. Consider the case when $i=h$. Since the invariants hold for $h-1$, $x^{h}_1$ is the successor of $u_{h}$ in the Hamiltonian cycle of the block $M^{h-1}$. By Property \ref{touchedblocks}, the blocks $\{B^{h}_1, B^{h}_2, \ldots, B^{h}_{k_{h}}\}$ are untouched when the algorithm has just added the edge $(u_{h-1}, v_{h-1})$. Since the invariants hold for $h-1$, these blocks remain the same in $G'_{h-1}$ as in $G'$. Therefore, the path $P^h=(u_h, x^h_1, x^h_2, \ldots, x^h_{k_h}, v_h)$ continues to satisfy the pre-conditions of Lemma \ref{extensionlemma} in $G'_{h-1}$ (Property \ref{prop3}). On addition of the edge $(u_h, v_h)$ to $G'_{h-1}$, the resultant graph $G'_{h}$ is outerplanar, by Lemma \ref{extensionlemma}. 

By Property \ref{touchedblocks}, the blocks $\{B^h_1, B^h_2, \ldots, B^h_{k_h}\}$ are precisely the blocks that were not touched at the time when $e_{h-1}$ was just added but became touched by the time when $e_h$ is just added. However, by Lemma \ref{Hamiltonianunchanged}, the blocks $\{B^h_1, B^h_2, \ldots, B^h_{k_h}\}$ merges with $M^{h-1}$ and forms the block $M^h$ of $G'_h$ and other blocks of $G'_h$ are same as those of $G'_{h-1}$ (and hence of $G'$) when the edge $e_h$ is added. Thus, all touched blocks have merged together to form the block $M^h$ in $G_h$ and the other blocks of $G'$ remain the same in $G_h$. 

Finally, we have to prove that the successor of $u_{h+1}$ in the Hamiltonian cycle of the block $M^h$ is $x^{h+1}_1$, which is the same as the successor $u_{h+1}$ in the Hamiltonian cycle of the block $B^{h+1}_0$ in $G'$. To see this, note that, by Lemma \ref{Hamiltonianunchanged}, if $v'$ is the successor of $v$ in the block containing the edge $(v, v')$ before an edge $(u_j, v_j)$ is added, it remains so after adding this edge, unless $v = u_j$. By Property \ref{initialvertex}, $u_{h+1} \ne u_j$ for any $j<h+1$ and hence it follows that $x^{h+1}_1$ remains the successor $u_{h+1}$ in the block containing the edge $(u_{h+1}, x^{h+1}_1)$ in $G'_h$. Hence, in order to prove that $x^{h+1}_1$ is the successor of $u_{h+1}$ in the Hamiltonian cycle of the block $M^h$, it suffices to prove that the edge $(u_{h+1}, x^{h+1}_1)$ belongs to the block $M^h$ in $G'_h$. In the previous paragraph we saw that, at the time of adding the edge $(u_h, v_h)$, all the touched blocks so far have merged together to form the the 
block $M^h$ of $G'_{h}$. Since the edge $(u_{h+1}, x^{h+1}_1)$ is in the block $B^{h+1}_0$ in $G'$, which is a touched block by Property \ref{blocksconnected} when the algorithm has just finished adding the edge $(u_h, v_h)$, the block $B^{h+1}_0$ has also been merged into $M^h$ and hence, the edge $(u_{h+1}, x^{h+1}_1)$ is in the block $M^h$ in $G'_h$.

Thus, all the invariants hold for $i=h$ and hence for each $1 \le i \le m$. Since $G''=G'_m$ by definition, $G''$ is outerplanar.
\qed
\end{proof}
\begin{lemma}
 $(\mathcal{P}'', \mathscr{X}'')$ is a path decomposition of $G''$ of width at most $16p+15$.
\end{lemma}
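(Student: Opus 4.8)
The plan is to follow the template of the Stage~2 width analysis. First I would verify that $(\mathcal{P}'', \mathscr{X}'')$ is genuinely a path decomposition of $G''$. Algorithm~\ref{algBiConnect} only ever enlarges bags, and for each new edge $e_i=(u_i,v_i)$ it performs exactly the update described in Section~\ref{background}, computed against the fixed decomposition $(\mathcal{P}', \mathscr{X}')$: it inserts one endpoint of $e_i$ into every bag indexed by $Gap_{\mathscr{X}'}(u_i,v_i)$. This extends the current (contiguous) range of that endpoint by an immediately adjacent interval, so the set of bags containing each vertex stays connected; the two endpoints of $e_i$ then lie together in the bag indexed by $FirstIndex_{\mathscr{X}'}$ of the later endpoint, so $e_i$ becomes covered; and edges covered at earlier stages remain covered since bags never shrink. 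Induction on $i$ then shows that $(\mathcal{P}'', \mathscr{X}'')$ is a path decomposition of $G''$.

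For the width, fix a bag index $t$ and let $I_t$ be the set of edges $e_i\in E''\setminus E'$ whose processing inserts a vertex into $X''_t$; this occurs precisely when $t\in Gap_{\mathscr{X}'}(u_i,v_i)$. Since each such processing adds at most one vertex to bag $t$, we have $|X''_t|\le |X'_t|+|I_t|$, so it suffices to prove $|I_t|\le |X'_t|$. I would do this by building an injection from $I_t$ into the set of cut vertices of $G'$ lying in $X'_t$. Fix $e_i\in I_t$ and, by Property~\ref{pathBetweenPairs}, let $P^i=(u_i=x^i_0,x^i_1,\ldots,x^i_{k_i},x^i_{k_i+1}=v_i)$ be its associated path in $G'$, whose interior vertices $x^i_1,\ldots,x^i_{k_i}=bypassSeq(e_i)$ are cut vertices of $G'$. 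Consecutive vertices of $P^i$ co-occur in a bag of $\mathscr{X}'$, so the intervals $Range_{\mathscr{X}'}(x^i_1),\ldots,Range_{\mathscr{X}'}(x^i_{k_i})$ overlap consecutively and their union is a single interval $J$; moreover $u_i$ shares a bag with $x^i_1$ and $v_i$ shares a bag with $x^i_{k_i}$, so $J$ meets both $Range_{\mathscr{X}'}(u_i)$ and $Range_{\mathscr{X}'}(v_i)$ and hence contains all of $Gap_{\mathscr{X}'}(u_i,v_i)$, in particular $t$. Thus $x^i_j\in X'_t$ for some $j\in\{1,\ldots,k_i\}$; choose one such $j$ and set $c_i:=x^i_j$. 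By Lemma~\ref{surjectionLemma} distinct edges of $E''\setminus E'$ have disjoint bypass sequences, so the $c_i$ are pairwise distinct cut vertices of $G'$ inside $X'_t$, giving $|I_t|\le |X'_t|$.

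Combining the two parts and recalling that $(\mathcal{P}', \mathscr{X}')$ has width at most $8p+7$, we get $|X''_t|\le 2|X'_t|\le 2(8p+8)=16p+16$ for every $t$, so $(\mathcal{P}'', \mathscr{X}'')$ has width at most $16p+15$. The step I expect to need the most care is the covering claim $Gap_{\mathscr{X}'}(u_i,v_i)\subseteq\bigcup_{j=1}^{k_i}Range_{\mathscr{X}'}(x^i_j)$: it rests on $P^i$ being a genuine path of $G'$ — so that every consecutive pair, including the end pairs $(u_i,x^i_1)$ and $(x^i_{k_i},v_i)$, actually shares a bag of $\mathscr{X}'$ — together with the elementary fact that a union of consecutively overlapping intervals of the line is again an interval; I would isolate this as a short preliminary claim. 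The other thing to keep straight is that all gaps are taken with respect to the original $\mathscr{X}'$ rather than the running $\mathscr{X}''$, which is exactly what makes the bypass-sequence charging clean and is why the validity check in the first paragraph is required.
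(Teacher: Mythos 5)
Your proposal is correct and follows essentially the same route as the paper: charge each insertion into bag $t$ to a cut vertex of the corresponding bypass sequence lying in $X'_t$, use the disjointness of bypass sequences (Lemma \ref{surjectionLemma}) to bound the number of insertions by $|X'_t|$, and conclude $|X''_t|\le 2|X'_t|\le 16p+16$. The only cosmetic differences are that you verify the validity of the updated decomposition in more detail (the paper simply cites the update rule of Section \ref{background}) and you prove the covering claim $Gap_{\mathscr{X}'}(u_i,v_i)\subseteq\bigcup_j Range_{\mathscr{X}'}(x^i_j)$ directly via a union of overlapping intervals, where the paper argues by contradiction with a partition of the bypassed vertices; both arguments are equivalent.
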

\begin{proof}
It is clear that $(\mathcal{P}'', \mathscr{X}'')$ is a path decomposition of $G''$, since we constructed it using the method explained in Section \ref{background}. 

For each $e_i = (u_i, v_i) \in E''\setminus E'$, let $bypassSeq(e_i)=x^i_1, x^i_2, \ldots, x^i_{k_i}$ and let $S_i$ denote the set of cut vertices that belong to $bypassSeq(e_i)$. By Property \ref{pathBetweenPairs}, $u_i, x^i_1, \ldots, x^i_{k_i}, v_i$ is a path in $G'$. 

We will show that, if $t \in Gap_{\mathscr{X}'}(u_i, v_i)$, then, $X'_t \cap S_i \ne \emptyset$. Without loss of generality, assume that $LastIndex_{\mathscr{X}'}(u_i) < FirstIndex_{\mathscr{X}'}(v_i)$. Since $u_i$ is adjacent to $x^i_1$, both of them are together present in some bag $X'_t \in \mathscr{X}'$, with $t \le LastIndex_{\mathscr{X}'}(u_i)$. Similarly, since $v_i$ is adjacent to $x^i_{k_i}$, they both are together present in some bag $X'_t \in \mathscr{X}'$, with $t \ge FirstIndex_{\mathscr{X}'}(v_i)$. Suppose some bag $X'_t \in \mathscr{X}'$ with $t \in Gap_{\mathscr{X}'}(u_i, v_i)$ does not contain any element of $S_i$. Let $U_i = \{ x^i_j \in S_i \mid x^i_j$ belongs to $X'_{t'} \in \mathscr{X}'$ for some $t' < t \}$ and $V_i = \{ x^i_j \in S_i \mid x^i_j$ belongs to $X'_{t'} \in \mathscr{X}'$ for some $t' > t \}$. From the definitions, $x^i_1 \in U_i$ and $x^i_{k_i} \in V_i$. If $U_i \cap V_i \ne \emptyset$, the vertices belonging to $U_i \cap V_i$ will be present in $X'_t$ as well, which is a 
contradiction. Therefore, $(U_i, V_i)$ is a partitioning of $S_i$. Let $q$ be the maximum such that $x^i_q \in U_i$. Clearly, $q < k_i$. Since $(x^i_q, x^i_{q+1})$ is an edge in $G'$, both $x^i_q$ and $x^i_{q+1}$ should be simultaneously present in some bag in $\mathscr{X}'$. But this cannot happen because $x^i_q \in U_i$ and $x^i_{q+1} \in V_i$. This is a contradiction and therefore, if $t \in Gap_{\mathscr{X}'}(u_i, v_i)$, then, $X'_t \cap S_i \ne \emptyset$.

By the modification done to the path decomposition to reflect the addition of an edge $e_i$, a vertex was inserted into $X''_t \in \mathscr{X}''$ only if $t \in Gap_{\mathscr{X}'}(u_i, v_i)$ and for each $X''_t \in \mathscr{X}''$ such that $t \in Gap_{\mathscr{X}'}(u_i, v_i)$, exactly one vertex ($u_i$ or $v_i$) was inserted into $X''_t$ while adding $e_i$. Moreover, when this happens, $X'_t \cap S_i \ne \emptyset$. Therefore, for any $t$ in the index set, $|X''_t| \le |X'_t|+|\{i \mid 1 \le i \le m, S_i \cap X'_t \ne \emptyset\}|$. But, $|\{i \mid 1 \le i \le m, S_i \cap X'_t \ne \emptyset\}| \le |X'_t|$, because $S_i \cap S_j =\emptyset$, for $1 \le i <j \le m$, by Lemma \ref{surjectionLemma}. Therefore, for any $t$, $|X''_t| \le 2 |X'_t| \le 2(8p+8)$. Therefore, width of the path decomposition $(\mathcal{P}'', \mathscr{X}'')$ is at most $16p+15$.
\qed
\end{proof}
\section{Time Complexity}
For our preprocessing, we need to compute a rooted block tree of the given outerplanar graph $G$ and compute the Hamiltonian cycles of each non-trivial block. These can be done in linear time \cite{Chartrand67,Hopcroft73,Syslo79}. The special tree decomposition construction in Govindan et al.\cite{Govindan98} is also doable in linear time. Using the Hamiltonian cycle of each non-trivial block, we do only a linear time modification in Section \ref{stage1}, to produce the nice tree decomposition $(T, \mathscr{Y})$ of $G$ of width $3$. An optimal path decomposition of the tree $T$, can be computed in $O(n \log n)$ time \cite{Skodinis2003}. For computing the nice path decomposition $(\mathcal{P}, \mathscr{X})$ of $G$ in Section \ref{stage1}, the time spent is linear in the size of the path decomposition obtained for $T$, which is $O(n \log{n})$ \cite{Skodinis2003}, and the size of $(\mathcal{P}, \mathscr{X})$ is $O(n \log n)$. Computing the FirstIndex, LastIndex and Range of vertices and the sequence number of 
blocks can be done in time linear in the size of the path decomposition. Since the resultant graph is outerplanar, Algorithm \ref{algMulticut} and Algorithm \ref{algBiConnect} adds only 
a linear number of new edges. Since the size of each bag in the path decompositions $(\mathcal{P}', \mathscr{X}')$ of $G'$ and $(\mathcal{P}'', \mathscr{X}'')$ of $G''$ are only a constant times the size of the corresponding bag in $(\mathcal{P}, \mathscr{X})$, the time taken for modifying $(\mathcal{P}, \mathscr{X})$ to obtain $(\mathcal{P}', \mathscr{X}')$ and later modifying it to $(\mathcal{P}'', \mathscr{X}'')$ takes only time linear in size of $(\mathcal{P}, \mathscr{X})$; i.e., $O(n \log n)$ time. Hence, the time spent in constructing $G''$ and its path decomposition of width $O(\operatorname{pw}(G))$ is $O(n \log n)$. 
\section{Conclusion}
In this paper, we have described a $O(n \log n)$ time algorithm to add edges to a given connected outerplanar graph $G$ of pathwidth $p$ to get a $2$-vertex-connected outerplanar graph $G''$ of pathwidth at most $16p+15$. We also get the corresponding path decomposition of $G''$ in $O(n \log n)$ time. Our technique is to produce a nice path decomposition of $G$ and make use of the properties of this decomposition, while adding the new edges. Biedl \cite{Biedl2012} obtained an algorithm for computing planar straight line drawings of a $2$-vertex-connected outerplanar graph $G$ on a grid of height $O(p)$. 
In conjunction with our algorithm, Biedl's algorithm will work for any outer planar graph $G$. As explained by Biedl \cite{Biedl2012}, this gives a constant factor approximation algorithm to get a planar drawing of $G$ of minimum height. 

\appendix
\section{Paths between consecutive vertices in the Hamiltonian cycle of a $2$-vertex-connected outerplanar graph}
\begin{claim}
Let $G$ be a $2$-vertex-connected outerplanar graph. Then, the number of internally vertex disjoint paths between any two consecutive vertices in the Hamiltonian cycle of $G$ is exactly two. 
\end{claim}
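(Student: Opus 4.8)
The plan is to prove the lower bound by two explicit paths and the upper bound by producing a single vertex through which every $u$--$v$ path in $G-uv$ must pass.

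\emph{Lower bound.} For consecutive vertices $u,v$ on the Hamiltonian cycle $C$, the edge $uv$ is one $u$--$v$ path and the path obtained from $C$ by deleting the edge $uv$ is a second one; these are internally vertex disjoint since the first has no internal vertices. Hence there are always at least two such paths, and the content of the claim is the upper bound.

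\emph{Upper bound.} I would fix an outerplanar embedding of $G$ in which $C$ is the boundary of the outer face; since $G$ is $2$-connected this boundary is a cycle through all vertices, i.e. the (unique) Hamiltonian cycle, so every edge of $G$ not on $C$ is a chord and the chords are pairwise non-crossing with respect to $C$. Write $C = w_0 w_1 \cdots w_{n-1} w_0$ with $u = w_0$ and $v = w_1$. The goal is to exhibit a vertex $z \in V(G)\setminus\{u,v\}$ that lies on every $u$--$v$ path of $G - uv$. Granting this, observe that any $u$--$v$ path of $G$ other than the single edge $uv$ does not use the edge $uv$ (a simple $u$--$v$ path uses that edge only if it consists of that edge alone), hence lies in $G - uv$ and passes through $z$; therefore two internally vertex disjoint $u$--$v$ paths, unless one of them is the edge $uv$, would share the internal vertex $z$, which is impossible, so a family of internally vertex disjoint $u$--$v$ paths has at most two members. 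Combined with the lower bound this gives exactly two.

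To construct $z$ I would split into two cases. If no chord of $G$ is incident to $u$, then $w_{n-1}$ is the only neighbour of $u$ in $G - uv$, so $z = w_{n-1}$ works. Otherwise set $j^\ast = \min\{\, j : 2 \le j \le n-2,\ w_0 w_j \in E(G)\,\}$, the chord at $u$ reaching furthest towards $v$ along $C$, and put $z = w_{j^\ast}$. I would then verify that $S = \{w_1, w_2, \ldots, w_{j^\ast-1}\}$ is a union of connected components of $G - uv - z$: it is connected via the sub-path $w_1 w_2 \cdots w_{j^\ast-1}$ of $C$, and no edge of $G - uv - z$ leaves $S$, since such an edge would have to be the deleted edge $uv$, a $C$-edge incident to the deleted vertex $w_{j^\ast}$, a chord at $u = w_0$ with index smaller than $j^\ast$ (impossible by minimality of $j^\ast$), or a chord joining a vertex of $\{w_1,\dots,w_{j^\ast-1}\}$ to a vertex of $\{w_{j^\ast+1},\dots,w_{n-1}\}$ (impossible, as its endpoints interleave with those of $w_0 w_{j^\ast}$, so the two chords would cross). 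Since $v = w_1 \in S$ and $u = w_0 \notin S$, this makes $z$ a cut vertex of $G - uv$ separating $u$ from $v$, as required.

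\emph{Main obstacle.} The lower bound and the chord-free case are routine; the delicate step is the verification in the second case that deleting $z = w_{j^\ast}$ together with the edge $uv$ really disconnects $u$ from $v$. This is where the non-crossing property of chords in an outerplanar embedding and the minimality built into the definition of $j^\ast$ must be combined carefully, with close attention to the cyclic labelling of $C$ (in particular checking that every edge potentially leaving $S$ falls into one of the four excluded types).
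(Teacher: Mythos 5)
Your proof is correct, and it takes a somewhat different route from the paper's. The paper argues by contradiction directly from two hypothetical internally disjoint $u$--$v$ paths avoiding the edge $uv$: the first edge $(u,v_i)$ of one of them must be a chord, this chord splits the closed disk bounded by the Hamiltonian cycle into two regions, and the other path is forced to cross it, violating planarity. You instead prove the stronger, Menger-flavoured statement that $G-uv$ has a one-vertex separator between $u$ and $v$, exhibiting it explicitly as $w_{n-1}$ (if $u$ has no chords) or as the endpoint $w_{j^\ast}$ of the innermost chord at $u$, and verifying by a purely combinatorial case analysis that no edge of $G-uv-z$ leaves $S=\{w_1,\ldots,w_{j^\ast-1}\}$. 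What your version buys is a cleaner separation of concerns: all topology is isolated in the single fact that chords of the outer cycle are pairwise non-crossing (equivalently, have non-interleaving endpoints), after which everything is index bookkeeping; it also pinpoints \emph{where} the bottleneck is, which the paper's contradiction does not. What it costs is that the non-crossing property itself deserves a one-line justification --- either that all chords are drawn inside the closed disk bounded by $C$ because $C$ bounds the outer face, so interleaving chords would cross, or that interleaving chords together with $C$ yield a $K_4$ subdivision --- which is essentially the same planarity consideration the paper deploys directly; note also that the paper does not need your chord-free case separately, since the existence of two disjoint paths avoiding $uv$ already forces a chord at $u$. With that one sentence added, your argument is complete and correct.
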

\begin{proof}
Since $G$ is a $2$-vertex-connected outerplanar graph, it can be embedded in the plane, so that its exterior cycle $C$ is the unique Hamiltonian cycle of $G$ \cite{Chartrand67}. Consider such an embedding of $G$ and let $C=$ $(v_1, v_2, \ldots, v_n, v_1)$, where the vertices of the cycle $C$ are given in the clockwise order of the cycle. Consider any pair of of consecutive vertices in $C$. Without loss of generality, let $(v_1, v_2)$ be this pair. The paths $P_1=(v_1, v_2)$ and $P_2=(v_1, v_n, v_{n-1}, \ldots,v_3, v_2)$ are obviously two internally vertex disjoint paths in $G$, between $v_1$ and $v_2$. 

Since the path $P_1=(v_1, v_2)$ is internally vertex disjoint from any other path in $G$ between $v_1$ and $v_2$, it is enough to show that, there cannot be two internally vertex disjoint paths $P_i$ and $P_j$ between $v_1$ and $v_2$ without using the edge $(v_1, v_2)$.
For contradiction, assume that $P_i$ and $P_j$ are two internally vertex disjoint paths between $v_1$ and $v_2$ without using the edge $(v_1, v_2)$. Let $(v_1, v_i)$ be the first edge of $P_i$ and $(v_1, v_j)$ be the first edge of $P_j$. Without loss of generality, assume that $i<j$. This implies that the edge $(v_1, v_i)$ is not an edge of the exterior cycle $C$ and hence, the (curve corresponding to the) edge $(v_1, v_i)$ splits the region bounded by $C$ into two parts. Let the closed region bounded by the path $v_i, v_{i+1}, \ldots, v_n, v_1$ and the edge $(v_1, v_i)$ be denoted by $C_l$ and the closed region bounded by by the path $v_i, v_{i-1}, \ldots, v_1$ and the edge $(v_1, v_i)$ be denoted by $C_r$.

Let the subpath of $P_j$ from $v_j$ to $v_2$ be denoted by $P'_j$. Since $v_j$ is in $C_l \setminus C_r$ and $v_2$ is in $C_r \setminus C_l$, the path $P'_j$ has to cross from $C_l$ to $C_r$ at least once. Since $P_j$ is vertex disjoint from $P_i$, the path $P'_j$ cannot cross from $C_l$ to $C_r$ at $v_i$. Since the path $P_j$ is simple, $P'_j$ cannot cross from $C_l$ to $C_r$ at $v_1$ also. This implies that there is an edge $(u, v)$ in $P'_j$ with $u$ belonging to $C_l \setminus C_r$ and $v$ belonging to $C_r \setminus C_l$. This would mean that the curve corresponding to the edge $(u, v)$ will cross the curve corresponding to the edge $(v_1, v_i)$, which is a contradiction, because by our assumption, our embedding is an outerplanar embedding. Therefore, there cannot be two internally vertex disjoint paths $P_i$ and $P_j$ between $v_1$ and $v_2$ without using the edge $(v_1, v_2)$.

Thus, the number of internally vertex disjoint paths between any two consecutive vertices in the Hamiltonian cycle of $G$ is exactly two.
\qed
\end{proof}
\section{Proof of Lemma \ref{orderEncounter}}
For any cut vertex $y$ of $G'$, let $s(y)$ denote the Hamiltonian cycle successor of $y$ in the (unique) child block at $y$ and let $G'_y$ denote the subgraph of $G'$ induced on the vertices belonging to the child block attached at $y$ and its descendant blocks. We call the variables $v', v, B, completed[$ $], bypass[$ $], completedCount$ the variables relevant for the traversal. First we prove two basic lemmas which makes the proof of Lemma \ref{orderEncounter} easier.
\begin{lemma}\label{easyprop}
At any point of execution, if a non-cut vertex $u$ is encountered, i.e., $v'$ is set to $u$, until a cut vertex is encountered in Line \ref{reassignv'} or $completedCount=|V(G)|$, from each vertex the algorithm proceed to encounter its Hamiltonian successor in the current block, completing it and incrementing the $completedCount$ by one each time.
\end{lemma}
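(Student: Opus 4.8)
The plan is to prove this by induction on the number of iterations of the outer while-loop (Line \ref{outerWhile}) executed from the moment $v'$ is set to the non-cut vertex $u$, simply tracking the values of the variables relevant for the traversal through one pass of the loop body. The only real content is the observation that, whenever $v'$ holds a non-cut vertex, the flow of control through the loop body is completely forced, and in particular the current block $B$ is left unchanged.

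For the base step I would argue as follows. At the moment $v' = u$ with $u$ a non-cut vertex, the guard of the inner while-loop (Line \ref{innerWhile}) is false since it requires $v'$ to be a cut vertex; hence the inner loop is not entered, and $bypassLoopTaken$ retains the value \texttt{FALSE} assigned to it in Line \ref{initialSeq}. Consequently the conditional block headed by Line \ref{checkLoopTaken} is skipped, so no edge is added to $E''$ and neither $B$ nor $\mathscr{X}''$ is touched there; and the conditional of Line \ref{checkLine} is also skipped because $u$ is not a cut vertex, so $B$ is not reassigned to $parent(B)$ in Line \ref{parentBlockLine}. Thus the sole effect of this iteration is that Lines \ref{completeLine2}–\ref{reassignline} set $completed(u) = \texttt{TRUE}$, increment $completedCount$ by one, and set $v = u$, while $B$ is unchanged throughout.

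For the inductive step, control returns to the top of the outer while-loop. If now $completedCount = |V(G')|$, the loop terminates and there is nothing further to prove; otherwise Line \ref{reassignv'} sets $v' = next_B(v)$, the Hamiltonian-cycle successor of the just-completed vertex in the (unchanged) current block $B$. If this successor is again a non-cut vertex, the base-step analysis applies verbatim, so it too is completed and $completedCount$ is incremented; if it is a cut vertex, then a cut vertex has been encountered in Line \ref{reassignv'}, which is exactly the terminating condition named in the statement. Iterating this dichotomy proves the lemma.

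The step needing the most care — though it is entirely a matter of reading the pseudocode rather than any mathematical argument — is verifying that nothing in the loop body alters the current block $B$ while only non-cut vertices are being encountered. One must confirm that $B$ changes only in Line \ref{childBlockLine} (inside the inner while-loop) and in Line \ref{parentBlockLine} (inside the conditional of Line \ref{checkLine}), and that both of these are unreachable on any iteration whose $v'$ is a non-cut vertex. I anticipate no genuine obstacle beyond this bookkeeping.
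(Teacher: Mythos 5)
Your proposal is correct and follows essentially the same route as the paper: check that while $v'$ holds a non-cut vertex the inner while-loop and the conditional at Line \ref{checkLine} are never entered, so the current block $B$ is unchanged and only the completion, increment and reassignment lines act, after which Line \ref{reassignv'} moves to the Hamiltonian successor in $B$. One small inaccuracy to fix: the lemma is also applied when $u$ is first encountered at Line \ref{childBlockLine} (immediately after bypassing cut vertices), in which case $bypassLoopTaken$ is TRUE and an edge is added at Line \ref{edgeline}; this does not hurt the argument, since that block modifies only $E''$ and the bags and none of the traversal variables ($B$, $v$, $v'$, $completed$, $completedCount$), but your base-step claim that no edge is added should be weakened to say only that the traversal variables are untouched there.
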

\begin{proof}
If $u$ is encountered in Line \ref{initialVertexLine}, the next vertex is encountered in Line \ref{reassignv'}, inside the outer while-loop. When $v'$ is a non-cut vertex encountered in Line \ref{reassignv'} or Line \ref{childBlockLine}, the inner while-loop condition and the condition in Line \ref{checkLine} will be evaluated to false until a cut vertex is encountered in Line \ref{reassignv'}. Therefore, the variables relevant for the traversal can get updated in lines \ref{completeLine2}-\ref{reassignline} and lines \ref{reassignv'}-\ref{initialSeq} only, until $v'$ gets assigned to refer to a cut vertex. From this, the property follows.
\qed
\end{proof}
\begin{lemma}\label{recursiveTraversal}
Suppose at a certain time $T_1$ of execution, Algorithm \ref{algBiConnect} has just executed Line \ref{innerWhile} and the variable $v'$ is referring to a cut vertex $x$ in $G'$ and the following conditions are also true: 
\begin{itemize}
\item[$C_1$.] The current block being traversed, i.e. the block referred to by the variable $B$, is the parent block at $x$.
\item[$C_2$.] $bypass(x)=$FALSE and for each cut vertex $y$ of $G'_x$, $bypass(y)=$FALSE. 
\item[$C_3$.] For each vertex $y$ of $G'_x$, $completed(y)=$ FALSE and $completedCount=c$, where $c \le |V(G')| - |V(G'_x)|$
\end{itemize}
Then, the algorithm will again come to Line \ref{innerWhile} with the variable $v'$ referring to the same cut vertex $x$. Let $T_2$ be the next time after $T_1$ when this happens. At time $T_2$, the following conditions will be true:
\begin{itemize}
\item[$E_1$.]\label{e1} The current block being traversed is the child block at $x$ and during the time between $T_1$ and $T_2$ the algorithm never sets the variable $B$ to a block other than the child block at $x$ or its descendant blocks. 
\item[$E_2$.]\label{e2} $bypass(x)=$TRUE and for each cut vertex $y$ of $G'_x$, $bypass(y)=$TRUE. 
\item[$E_3$.]\label{e3} $completed(x)=$FALSE and for each vertex $y$ of $G'_x$ other than $x$, $completed(y)=$TRUE and $completedCount=c+|V(G'_x)|-1 < |V(G')|$ .
\item[$E_4$.]\label{e4} The order in which the algorithm encounters vertices during the period from the time $x$ was encountered just before $T_1$ and till the time $T_2$ is $Order(G'_x, x), x$.
\end{itemize}
\end{lemma}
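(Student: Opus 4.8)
The plan is to prove the lemma by strong induction on $|V(G'_x)|$. Write $B_x$ for the child block at $x$ and let $x, w_1, \ldots, w_t, x$ be its Hamiltonian cycle (using the single-edge convention when $t=1$), so that $s(x)=w_1$, $next_{B_x}(w_i)=w_{i+1}$ for $1\le i<t$, and $next_{B_x}(w_t)=x$. For each $w_i$ that is a cut vertex of $G'$, the subgraph $G'_{w_i}$ is contained in $G'_x$; moreover these subgraphs are pairwise disjoint, are disjoint from $\{x\}$ and from the non-cut $w_j$'s, and $V(G'_x)=\{x\}\cup\{w_i : w_i \text{ non-cut}\}\cup\bigcup_{w_i \text{ cut}}V(G'_{w_i})$. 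In particular $|V(G'_{w_i})|<|V(G'_x)|$ for every cut $w_i$, which is what will license the recursive calls, and the displayed summation will be used to evaluate $completedCount$.

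For the base case $|V(G'_x)|=2$, the block $B_x$ is a single edge $(x,w_1)$ with $w_1$ non-cut, and a direct trace (bypass $x$; encounter $w_1$; fall through the inner while-loop; add the edge $(v,w_1)$; complete $w_1$; return to Line \ref{innerWhile} with $v'\leftarrow next_{B_x}(w_1)=x$) verifies $E_1$--$E_4$; here $Order(G'_x,x)=x,w_1$, so the encounter list is $x,w_1,x$. For the inductive step, since $bypass(x)=$ FALSE by $C_2$, at $T_1$ the inner while-loop of Algorithm \ref{algBiConnect} is entered, $x$ is bypassed, $B$ becomes $B_x$, and $v'$ becomes $w_1$. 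I would then show the algorithm processes $w_1,\ldots,w_t$ in this order. If $w_i$ is not a cut vertex, the inner while-loop is not entered, $w_i$ is completed in Line \ref{completeLine2}, and control returns to the outer while-loop with $v'\leftarrow next_{B_x}(w_i)=w_{i+1}$; this contributes the single symbol $w_i=S_i$ to the encounter list (cf.\ Lemma \ref{easyprop}).

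If $w_i$ \emph{is} a cut vertex, then at the moment $v'$ has just been set to $w_i$ and Line \ref{innerWhile} has just been executed, the hypotheses $C_1$--$C_3$ hold for $w_i$ in place of $x$: the current block is $B_x=parent(child_{w_i}(B_x))$, every vertex of $G'_{w_i}$ still has its $bypass$ and $completed$ flags FALSE (by $C_2,C_3$ for $x$ and the disjointness observation, since so far the algorithm has only touched $x$ and, for $j<i$, the vertices of $\{w_j\}\cup G'_{w_j}$), and $completedCount$ is at most $c+(|V(G'_x)|-1-|V(G'_{w_i})|)\le |V(G')|-|V(G'_{w_i})|$. Applying the induction hypothesis, the algorithm returns to Line \ref{innerWhile} with $v'=w_i$ and $bypass(w_i)=$ TRUE, having turned $bypass(y)=$ TRUE for every cut vertex $y$ of $G'_{w_i}$, completed every vertex of $G'_{w_i}$ except $w_i$, raised $completedCount$ by $|V(G'_{w_i})|-1$, used only $child_{w_i}(B_x)$ and its descendants as the current block meanwhile, and encountered precisely $Order(G'_{w_i},w_i),w_i$. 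At that return point $bypassLoopTaken=$ FALSE and $sequence$ is empty (both reset in Line \ref{initialSeq} just before the relevant Line \ref{innerWhile}), so no spurious edge is added; Line \ref{checkLine} then fires, resetting $B\leftarrow parent(B)=B_x$, $w_i$ is completed, and control returns to the outer while-loop with $v'\leftarrow next_{B_x}(w_i)=w_{i+1}$. The encounter list gained in this step is $w_i,\bigl(Order(G'_{w_i},w_i)\text{ without its leading }w_i\bigr),w_i=S_i$. After $w_t$ is handled, $v'\leftarrow next_{B_x}(w_t)=x$ is the second (hence, by Property \ref{completionProperty}, the last) encounter of $x$, and the ensuing Line \ref{innerWhile} check is the required $T_2$.

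It remains to read off $E_1$--$E_4$ at $T_2$: $B=B_x$ is the child block at $x$, and every block made current during $[T_1,T_2]$ is $B_x$ or a descendant of some $child_{w_i}(B_x)$, hence a descendant of the child block at $x$ ($E_1$); $bypass$ has been set TRUE at $x$ and at every cut vertex of $G'_x$ ($E_2$); $completed(x)$ is still FALSE at $T_2$ while every other vertex of $G'_x$ is completed, and adding the increments gives $completedCount=c+\sum_{w_i\text{ non-cut}}1+\sum_{w_i\text{ cut}}|V(G'_{w_i})|=c+|V(G'_x)|-1<|V(G')|$ ($E_3$); and the concatenated encounter list is $x,S_1,\ldots,S_t,x=Order(G'_x,x),x$ by the definition of $Order$ ($E_4$). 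The main obstacle to treat carefully is the inner-while ``chaining'': when $w_1=s(x)$ is itself a cut vertex, the bypass of $x$ flows, within one execution of the inner while-loop, directly into bypassing $w_1$ and possibly a longer chain, rather than a clean return to the loop head. One must argue that invoking the induction hypothesis on $G'_{w_1}$ is nonetheless valid, which it is because the hypotheses $C_1$--$C_3$ and the conclusions $E_1$--$E_4$ speak only of the local machine state (the current block, the $bypass$ and $completed$ flags, $completedCount$, and the sequence of encountered vertices) and are insensitive both to how many while-loops are currently open and to the current contents of $sequence$ and of the $bypassSeq$ strings. A secondary delicate point is tracking $bypassLoopTaken$ and $sequence$ so that exactly the intended edges are added and the $B\leftarrow parent(B)$ step of Line \ref{checkLine} fires at precisely the right moments.
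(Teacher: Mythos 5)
Your proposal is correct and takes essentially the same route as the paper's proof: an induction (yours on $|V(G'_x)|$, the paper's on the number of blocks of $G'_x$) that traces the Hamiltonian cycle of the child block at $x$, advances past non-cut vertices via Lemma \ref{easyprop}, re-verifies $C_1$--$C_3$ and invokes the induction hypothesis at each cut vertex $w_i$, and then reads off $E_1$--$E_4$ with the same flag and counter bookkeeping, including the return to the parent block at Line \ref{parentBlockLine}. The one blemish is the parenthetical appeal to Property \ref{completionProperty}, which would be circular here (it is deduced from Lemma \ref{orderEncounter}, which rests on this lemma), but that remark is not needed for $E_1$--$E_4$, so the argument stands without it.
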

\begin{proof}
We give a detailed proof of this lemma below, which is in principle just a description of the execution of the algorithm. Instead of going through the proof, the reader may verify the correctness of the lemma directly from the algorithm.  

We prove this lemma using an induction on $n(x)$, the number of blocks in $G'_x$. For the base case, assume that $n_x=1$; i.e., $G'_x$ is a leaf block of $G'$. Suppose the assumptions in the statement of the lemma hold at time $T_1$. By this assumption, the condition of the inner while-loop in Line \ref{innerWhile} has been evaluated to true at time $T_1$ and after executing Line \ref{bypassline} $bypass(x)$ will be set to TRUE. Similarly, it follows from the assumptions that after executing Line \ref{childBlockLine} the current block is set as the child block at $x$ and the algorithm sets $v'=s(x)$, the successor of $x$ in the child block at $x$. Since the child block at $x$ is a leaf block, $v'=s(x)$ is not a cut vertex. By Lemma \ref{easyprop}, until $v'$ gets assigned to refer to a cut vertex, from each vertex the algorithm proceed to encounter its Hamiltonian successor in the current block, completing it and incrementing the $completedCount$ by one each time. Note that $completedCount<|V(G')|$ all this 
time, because at time $T_1$, we had $c \le |V(G')| - |V(G'_x)|$ and the number of times $completedCount$ was incremented since time $T_1$ is less than $|V(G'_x)|$. Since the only cut vertex in the current block is $x$ itself, this goes on until $x$ is encountered in Line \ref{reassignv'} and then it reaches Line \ref{innerWhile} at time $T_2$. From this, it follows that the order in which vertices were encountered during the period from the time $x$ was encountered just before $T_1$ and till the time $T_2$ is the Hamiltonian cycle order of the child block at $x$, starting and ending at $x$. This is precisely $Order(G'_x, x), x$. It is evident that conditions $E_1$ - $E_3$ are also true at time $T_2$.

Now, we will assume that the lemma holds for all cut vertices $y$ such that $n(y) < n(x)$. In order to prove the lemma, it is enough to prove that the lemma holds for $x$ as well. Let $x, v_1, v_2, \ldots, v_t, x$ be the Hamiltonian cycle of the child block at $x$. Suppose the assumptions in the statement of the lemma hold at time $T_1$ when the Algorithm \ref{algBiConnect} has just executed Line \ref{innerWhile} and $v'=x$. As in the base case, $bypass(x)$ will be set to TRUE and in Line \ref{childBlockLine} the current block is set as the child block at $x$ and the algorithm sets $v'=s(x)=v_1$. 

If the set $\{v_{1}, v_{2}, \ldots, v_t\}$ does not contain any cut vertices, we are in the base case and we are done. Otherwise, let $l$ be the minimum index in $\{1, 2, \ldots, t\}$ such that $v_l$ is a cut vertex. By Lemma \ref{easyprop}, from each vertex the algorithm proceed to encounter its Hamiltonian successor in the current block, completing it and incrementing the $completedCount$ each time until $v'=v_l$. Notice that $completedCount=c+l-1 \le |V(G')| - |V(G'_x)|+ l-1 \le |V(G')| - |V(G'_{v_l})|$, when $v_l$ is encountered in Line \ref{reassignv'}. After this, the algorithm reaches line \ref{innerWhile} and executes it with $v'=v_l$ at time $T'_l$. At this time, the block being traversed is the parent block at $v_l$. Since $v_l$ or any other vertex in $G'_{v_l}$ were not encountered till now after $T_1$, and by the assumptions of the lemma about the state of the traversal related variables at time $T_1$, we know that at time $T'_l$, $bypass(v_l)=$FALSE and for each cut vertex $z$ of $G'_{v_l}$, 
$bypass(z)=$FALSE. Similarly, for each vertex $y$ of $G'_{v_l}$, $completed(y)=$ FALSE. Thus, the pre-conditions of the lemma are satisfied for the vertex $v_l$ and $G'_l$ at time $T'_l$. 

The order in which the vertices are encountered during the period from the time $x$ was encountered just before $T_1$ and till the time $T'_l$ is $x, v_1, v_2, \ldots, v_l$. Since $v_l$ is a cut vertex in the child block at $x$, $n(v_l) < n(x)$. Therefore, by induction hypothesis, the algorithm will again come to Line \ref{innerWhile} with the variable $v'=v_l$ and if $T''_l$ is the next time this happens after $T_l$, the conditions $E_1$ - $E_4$ will be satisfied with $v_l$ replacing $x$, $T'_l$ and $T''_l$ replacing $T_1$ and $T_2$ respectively and $c+l-1$ replacing $c$. 

At time $T''_l$, when the algorithm is back at Line \ref{innerWhile} and executes the line with $v'=v_l$ and $bypass(v_l)=$TRUE, the while-loop condition will evaluate to FALSE and so, the loop will not be entered. When the algorithm reaches Line \ref{checkLine}, the condition will evaluate to true and therefore, in Line \ref{parentBlockLine}, the variable $B$ will be updated to its parent block. Since $B$ is the child block at $v_l$ before this, $B$ will be updated to the parent block at $v_l$, which is the same as the child block at $x$. In Line \ref{completeLine2}, $completed(v_l)$ is set to TRUE and $completedCount$ becomes $c+l-1+|V(G'_{v_l})| < |V(G')|$ and therefore, in Line \ref{outerWhile}, the outer while-loop condition evaluates to TRUE.  
Since $v'=v_l$ now, the algorithm executes Line \ref{reassignv'}, and $v'$ will be updated to $v_{l+1}$, the successor of $v_l$ in $B$. From the time $x$ was encountered just before $T_1$, the order in which the algorithm has encountered vertices is $x, v_1, v_2, \ldots, v_{l-1}, Order(G'_{v_l}, v_l), v_l, v_{l+1}$. 

By repeating similar arguments as above, we can reach the following conclusion. If $i$ is the maximum index in $\{1, 2, \ldots, t\}$ such that $v_i$ is a cut vertex, the algorithm will come to Line \ref{innerWhile} with the variable $v'=v_i$ at time $T''_i$ such that the conditions below will be true at time $T''_i$. 
\begin{itemize}
\item The current block being traversed is the child block at $v_i$. During the time between $T_1$ and $T''_i$ the algorithm never sets the variable $B$ to a block other than the child block at $x$ or its descendant blocks. 
\item For each cut vertex $y$ of $G'_x$, $bypass(y)=$TRUE. 
\item For $v_k \in \{v_i, v_{i+1}, \ldots, v_t, x\}$, $completed(v_k)=$FALSE and for each vertex $y$ of $G'_x$ outside this set,\\$completed(y)=$TRUE. Moreover, $completedCount=c+i-1+\sum_{1 \le j \le i, v_j\text{ is a cut vertex}}{(|V(G'_{v_j})|-1)}$
\item The order in which the algorithm encounters vertices during the period from the time $x$ was encountered just before $T_1$ and till the time $T''_i$
is given by $x, S_1, \ldots, S_{v_i}$, where for $1 \le j \le i$, $S_j = v_j$ if $v_j$ is not a cut vertex in $G'$ and $S_j=Order(G'_{v_j}, v_j), v_j$ otherwise. 
\end{itemize}
By similar arguments as at time $T''_l$, we can show that, at the time $T''_i$, the inner while-loop condition is false, because $bypass(v_i)=$TRUE and in Line \ref{parentBlockLine} the variable $B$ will be updated to the child block at $x$. In Line \ref{completeLine2}, $completed(v_i)$ is set to TRUE and $completedCount$ becomes $c+i+\sum_{1 \le j \le i, v_j\text{ is a cut vertex}}{(|V(G'_{v_j})|-1)}$. Since this value is less than $|V(G')|$, in Line \ref{outerWhile} the outer while-loop condition evaluates to TRUE. Since $v'=v_i$ now, the algorithm executes Line \ref{reassignv'}, and $v'$ will be updated to the successor of $v_i$ in $B$. 

If $v_i=v_t$, then at this stage, $v'=x$ and when the algorithm reaches Line \ref{innerWhile}, that is the time $T_2$ mentioned in the lemma and the order in which the algorithm has encountered vertices is $x, S_1, \ldots, S_{v_t}, x$, where for $1 \le j \le t$, $S_j = v_j$ if $v_j$ is not a cut vertex in $G'$ and $S_j=Order(G'_{v_j}, v_j), v_j$ otherwise. If $v_i \ne v_t$, by the maximality of $i$ and using Lemma \ref{easyprop}, until $v'$ gets assigned the value $x$ in Line \ref{reassignv'}, from each vertex the algorithm proceed to encounter its Hamiltonian successor in the current block, completing it and incrementing the $completedCount$ each time. We have $completedCount<|V(G')|$ all this time, because at time $T_1$ we had $completedCount=c$ and afterwards $completedCount$ was incremented once for each vertex $y$ in $G'_x$ for which $completed(y)=TRUE$ but $completed(x)=$FALSE still. When $x$ is encountered in Line \ref{reassignv'} and then the algorithm reaches Line \ref{innerWhile}, that is the time 
$T_2$ mentioned in the lemma. From the time when $x$ was encountered just before $T_1$, the order in which the algorithm has encountered vertices is $x, S_1, \ldots, S_{v_t}, x$, where for $1 \le j \le i$, $S_j = v_j$ if $v_j$ is not a cut vertex in $G'$ and $S_j=Order(G'_{v_j}, v_j), v_j$ otherwise. Thus, in both cases, at time $T_2$ the order in which the algorithm has encountered vertices from the time when $x$ was encountered just before $T_1$ is given by $x, S_1, \ldots, S_{v_t}, x = Order(G'_x, x), x$. Notice also that the conditions $E_1$-$E_3$ also hold at time $T_2$ and hence the lemma is proved.
\qed
\end{proof}
\textbf {Lemma \ref{orderEncounter}.}
If $G'$ is given as the input graph to Algorithm \ref{algBiConnect}, where $G'$ is a connected outerplanar graph with at least two vertices, such that for every cut vertex $x$ of $G'$, $G'\setminus x$ has exactly two connected components, and if $v_0$ is the non-cut vertex in the root block of $G'$ from which the algorithm starts the traversal, then $Order(G', v_0)$ is the order in which Algorithm \ref{algBiConnect} encounters the vertices of $G'$. 
\begin{proof}
Suppose $v_0, v_1, \ldots, v_t, v_0$ is the Hamiltonian cycle of the root block of $G'$. Our method is to use Lemma \ref{recursiveTraversal} at each cut vertex in the root block of $G'$ and Lemma \ref{easyprop} at each non-cut vertex in the root block of $G'$, until $completedCount=|V(G')|$. We will see that this will go on until $v_t$ is completed.  

After the initializations done in lines \ref{forLoop} - \ref{completeLine1}, for
every cut vertex $y$ in $G'$, $bypass(y)=$FALSE, current block $B$ is the root block of $G'$,  $v'=v_0$, $completedCount=1$, $completed(v_0)$=TRUE and for every other vertex $y$ in $G'$, $completed(y)$=FALSE. Since the outer while-loop condition is TRUE, the algorithm enters the loop and in Line \ref{reassignv'}, $v_1$ is encountered. Let us call this instant as time $T_1$. 

If the set $\{v_1, v_2, \ldots, v_t\}$ contains a cut vertex, let $l$ be the minimum index in $\{1, 2, \ldots, t\}$ such that $v_l$ is a cut vertex. Otherwise, let $v_l =v_t$. By Lemma \ref{easyprop}, until $v'$ gets assigned the value $v_l$ in Line \ref{reassignv'}, from each vertex the algorithm proceed to encounter its Hamiltonian successor in the current block (i.e the root block), completing it and incrementing the $completedCount$ each time, making $completedCount= l <|V(G')|$ when $v_l$ is encountered in Line \ref{reassignv'}. The algorithm then reaches line \ref{innerWhile} and executes it with $v'=v_l$. Let us call this instant as time $T'_l$. The order in which the vertices are encountered from the beginning of execution of the algorithm is $v_0, v_1, v_2, \ldots, v_l$.

If $v_l=v_t$ and $v_l$ is not a cut vertex, that means the graph $G'$ does not have a cut vertex and $|V(G')|=t+1$. In this case, the inner while-loop condition evaluates to FALSE. Similarly, the condition in Line \ref{checkLine} also evaluates to FALSE. In Line \ref{completeLine2}, the algorithm sets $completed(v_t)=$TRUE and increments $completedCount$, making $completedCount=t+1=|V(G')|$. When the algorithm executes Line \ref{outerWhile}, the outer while-loop condition evaluates to FALSE and the algorithm terminates. The order in which the vertices were encountered from the beginning of execution of the algorithm is $v_0, v_1, \ldots, v_t = Order(G', v_0)$. 

The other case is when $v_l$ is a cut vertex at time $T'_l$. At this time, $completedCount=l \le |V(G')| - |V(G'_l)|$ and by similar arguments as in the proof of Lemma \ref{recursiveTraversal}, the pre-conditions of the lemma are satisfied for the vertex $v_l$ and $G'_l$ at time $T'_l$.
Therefore, by Lemma \ref{recursiveTraversal} applied to the vertex $v_l$ and $G'_l$, the algorithm will again come to Line \ref{innerWhile} with the variable $v'=v_l$. If $T''_l$ is the next time this happens after $T'_l$, we can see that the state of the traversal related variables at time $T''_l$ is similar to those we obtained in the proof of Lemma \ref{recursiveTraversal}, except that $completedCount=l+|V(G'_{v_l})|-1$. 

Following similar arguments as in the proof of Lemma \ref{recursiveTraversal}, we can show that when the algorithm executes Line \ref{parentBlockLine} the next time after $T''_l$, the variable $B$ will be updated to the parent block at $v_l$, which is the same as the root block. In Line \ref{completeLine2}, $completed(v_l)$ is set to TRUE and $completedCount$ becomes $l+|V(G'_{v_l})|$. 

By repeating similar arguments as in the proof of Lemma \ref{recursiveTraversal}, we can reach the following conclusion. If $i$ is the maximum index in $\{1, 2, \ldots, t\}$ such that $v_i$ is a cut vertex, the algorithm will come to Line \ref{innerWhile} with the variable $v'=v_i$ at time $T''_i$ such that the conditions below will be true at time $T''_i$. 
\begin{itemize}
\item The current block being traversed is the child block at $v_i$. 
\item For each cut vertex $y$ of $G'$, $bypass(y)=$TRUE. 
\item For $v_k \in \{v_i, v_{i+1}, \ldots, v_t\}$, $completed(v_k)=$FALSE and for each vertex $y$ of $G'$ outside this set,\\$completed(y)=$TRUE. Moreover, $completedCount=i+\sum_{1 \le j \le i, v_j\text{ is a cut vertex}}{(|V(G'_{v_j})|-1)}$
\item The order in which the algorithm encounters vertices from the beginning of execution of the algorithm till the time $T''_i$ is given by $v_0, S_1, \ldots, S_{v_i}$, where for $1 \le j \le i$, $S_j = v_j$ if $v_j$ is not a cut vertex in $G'$ and $S_j=Order(G'_{v_j}, v_j), v_j$ otherwise. 
\end{itemize}
By similar arguments as earlier, when the algorithm executes Line \ref{parentBlockLine} the next time after $T''_i$, the variable $B$ will be updated to the parent block at $v_i$, which is the same as the root block. In Line \ref{completeLine2}, $completed(v_i)$ is set to TRUE and $completedCount$ becomes $i+1+\sum_{1 \le j \le i, v_j\text{ is a cut vertex}}{(|V(G'_{v_j})|-1)}$. 

If the above sum is equal to $|V(G')|$, that means $v_i=v_t$. When the algorithm executes Line \ref{outerWhile}, the outer while-loop condition evaluates to FALSE and the algorithm terminates. The order in which the vertices were encountered from the beginning of execution of the algorithm is $v_0, S_1, \ldots, S_{v_t}$, where for $1 \le j \le t$, $S_j = v_j$ if $v_j$ is not a cut vertex in $G'$ and $S_j=Order(G'_{v_j}, v_j), v_j$ otherwise. This is the same as $Order(G', v_0)$. 

Instead, if the sum is less than $|V(G')|$, then $v_i \ne v_t$. In Line \ref{outerWhile} the outer while-loop condition evaluates to TRUE.  
Since $v'=v_i$ and $B$ is the root block, when the algorithm executes Line \ref{reassignv'}, $v'$ will be updated to $v_{i+1}$, the successor of $v_i$ in the root block. By the maximality of $i$ and using similar arguments as earlier, we can show that until $v'$ gets assigned the value $v_t$ in Line \ref{reassignv'}, from each vertex the algorithm proceed to encounter its Hamiltonian successor in the current block. When $v_t$ is encountered in Line \ref{reassignv'} and then reach line \ref{innerWhile}, the condition of the inner while-loop will evaluate to FALSE because $v_t$ is not a cut vertex. Similarly, the condition in Line \ref{checkLine} will also evaluate to FALSE. In Line \ref{completeLine2}, $completed(v_t)$ is set to TRUE and $completedCount$ becomes $t+1+\sum_{1 \le j \le i, v_j\text{ is a cut vertex}}{(|V(G'_{v_j})|-1)}$, which is equal to $|V(G')|$. When the algorithm executes Line \ref{outerWhile}, the outer while-loop condition evaluates to FALSE and the algorithm terminates. From the 
beginning of execution, the order in which the algorithm has encountered vertices is $v_0, S_1, \ldots, S_{v_t}$, where for $1 \le j \le i$, $S_j = v_j$ if $v_j$ is not a cut vertex in $G'$ and $S_j=Order(G'_{v_j}, v_j), v_j$ otherwise. This order is the same as $Order(G', v_0)$.

In all cases, from the beginning of execution of Algorithm \ref{algBiConnect} till it terminates, the order in which the algorithm encounters the vertices of $G'$ is given by $Order(G', v_0)$.
\qed
\end{proof}
\end{document}